\documentclass[sigplan,10pt]{acmart}

\usepackage[utf8]{inputenc}

\usepackage{tikz}
\usepackage{xspace}
\usepackage{placeins}
\usepackage{enumitem}

\usepackage{graphicx}
\usepackage{subcaption}

\usepackage{algorithm}
\usepackage[noend]{algpseudocode}
\usepackage[cppcommentstyle,continuouslinenumbers,mainfont=small]{smartalgorithmic}

\definecolor{darkgreen}{rgb}{0.0, 0.4, 0.0}
\definecolor{crimson}{rgb}{0.4, 0.05, 0.1}
\usepackage[apply,nonotes]{xnotes}
\AddXNotesUser{at}{Andrei}{darkgreen}
\AddXNotesUser{s}{Sasha}{red}
\AddXNotesUser{d}{Daniel}{blue}
\AddXNotesUser{b}{Balaji}{purple}
\AddXNotesUser{todo}{TODO}{crimson}

\newcommand{\sasha}{\snote}

\usepackage{cleveref}

\NewDeclarationType{Function}{\textsf}
\NewDeclarationType{Variable}{\mathit}
\NewDeclarationType{MessageType}[1]{\textbf{``#1''}}
\NewDeclarationType{EntryReason}{\textsf}
\NewDeclarationType{Timer}{\textsf}
\NewDeclarationType{Constant}{\text}
\NewDeclarationType{Operation}{\mathit}

%
%

%
%
\ifdefined\blindedSubmission
  \newcommand{\BabyRaptr}{\textnormal{\textsc{Baby Raikou}}\xspace}
  \newcommand{\Raptr}{\textnormal{\textsc{Raikou}}\xspace}
  \newcommand{\BoldBabyRaptr}{\textnormal{\textbf{{\textsc{Baby Raikou}}}}\xspace}
  \newcommand{\BoldRaptr}{\textnormal{\textbf{{\textsc{Raikou}}}}\xspace}
  \newcommand{\AptosPlus}{\textnormal{Aptos+}\xspace}
  \graphicspath{{assets/private/}}
\else
  \newcommand{\BabyRaptr}{\textnormal{\textsc{Baby Raptr}}\xspace}
  \newcommand{\Raptr}{\textnormal{\textsc{Raptr}}\xspace}
  \newcommand{\BoldBabyRaptr}{\textnormal{\textbf{{\textsc{Baby Raptr}}}}\xspace}
  \newcommand{\BoldRaptr}{\textnormal{\textbf{{\textsc{Raptr}}}}\xspace}
  \newcommand{\AptosPlus}{\textnormal{Aptos+}\xspace}
  \graphicspath{{assets/public/}}
\fi

\newcommand\Baseline{$\textsc{Jolteon}^*$}
\newcommand{\queuing}{block inclusion time\xspace}

\DeclareGlobalConstant{true}
\DeclareGlobalConstant{false}
\DeclareGlobalConstant*[SupMQuorumInline]{\ensuremath{\left\lceil \frac{n + f + 1}{2} \right\rceil}}

\DeclareGlobalOperation[aDeliver]{a\_deliver}
\DeclareGlobalOperation[aBcast]{\mathit{a\_bcast}}
\DeclareGlobalOperation{messages}
\DeclareGlobalOperation{chain}
\DeclareGlobalOperation{block}
\DeclareGlobalOperation{blockPrefix}
\DeclareGlobalOperation{rank}

\DeclareGlobalMessageType[mBatch]{batch}
\DeclareGlobalMessageType[mPropose]{propose}
\DeclareGlobalMessageType[mQCVote]{qc-vote}
\DeclareGlobalMessageType[mAdvanceRound]{advance round}
\DeclareGlobalMessageType[mCCVote]{cc-vote}
\DeclareGlobalMessageType[mTCVote]{tc-vote}
\DeclareGlobalMessageType[mPoAVote]{poa-vote}
\DeclareGlobalMessageType[mPoA]{poa}

\DeclareGlobalTimer[tRoundTimeout]{Round Timeout}
\DeclareGlobalTimer[tQCVote]{QC-Vote}

\DeclareGlobalVariable{extendPrefix}
\DeclareGlobalVariable{commitPrefix}
\DeclareGlobalVariable{extendRank}

\DeclareGlobalFunction{OnNewQC}
\DeclareGlobalFunction{TryAdvanceRound}
\DeclareGlobalFunction{QCVote}
\DeclareGlobalFunction{FormQC}
\DeclareGlobalFunction{CommitQC}
\DeclareGlobalFunction{Add}
\DeclareGlobalFunction{AvailablePrefix}
\DeclareGlobalFunction{GetPayload}
\DeclareGlobalFunction{VerifyEntryReason}
\DeclareGlobalFunction{VerifyQC}
\DeclareGlobalFunction{VerifyCC}
\DeclareGlobalFunction{VerifyTC}


\DeclareGlobalEntryReason[rFullQC]{FullQC}
\DeclareGlobalEntryReason[rCC]{CC}
\DeclareGlobalEntryReason[rTC]{TC}

\DeclareGlobalVariable{replicaId}
\DeclareGlobalVariable*[rCur]{\ensuremath{r_{\VariableStyle{cur}}}}
\DeclareGlobalVariable*[rTimeout]{\ensuremath{r_{\VariableStyle{timeout}}}}
\DeclareGlobalVariable{qc}
\DeclareGlobalVariable*[qcHigh]{\ensuremath{\qc_{\VariableStyle{high}}}}
\DeclareGlobalVariable*[qcCommitted]{\ensuremath{\qc_{\VariableStyle{committed}}}}
\DeclareGlobalVariable*[qcDelivered]{\ensuremath{\qc_{\VariableStyle{delivered}}}}
\DeclareGlobalVariable*[qcMin]{\ensuremath{\qc_{\VariableStyle{min}}}}
\DeclareGlobalVariable*[qcMax]{\ensuremath{\qc_{\VariableStyle{max}}}}
\DeclareGlobalVariable*[qcGenesis]{\ensuremath{\qc_{\VariableStyle{genesis}}}}
\DeclareGlobalVariable*[qcParent]{\ensuremath{\qc_{\VariableStyle{parent}}}}
\DeclareGlobalVariable*[genesisBlock]{\ensuremath{B_{\VariableStyle{genesis}}}}
\DeclareGlobalVariable{cc}
\DeclareGlobalVariable{tc}
\DeclareGlobalVariable{sn}
\DeclareGlobalVariable{lastQCVote}
\DeclareGlobalVariable{ccVoted}
\DeclareGlobalVariable{entryReason}
\DeclareGlobalVariable{reason}
\DeclareGlobalVariable*[leader]{\ensuremath{L}}
\DeclareGlobalVariable{payload}
\DeclareGlobalVariable{proposal}
\DeclareGlobalVariable{round}
\DeclareGlobalVariable{voteRound}
\DeclareGlobalVariable{qcRound}
\DeclareGlobalVariable{ccRound}
\DeclareGlobalVariable{prefix}
\DeclareGlobalVariable{votePrefix}
\DeclareGlobalVariable{qcPrefix}
\DeclareGlobalVariable{ccPrefix}
\DeclareGlobalVariable{maxRound}
\DeclareGlobalVariable{maxPrefix}
\DeclareGlobalVariable{subBlocks}
\DeclareGlobalVariable*[nSubBlocks]{K}
\DeclareGlobalVariable{vote}
\DeclareGlobalVariable{votes}
\DeclareGlobalVariable{qcVotes}
\DeclareGlobalVariable{ccVotes}
\DeclareGlobalVariable{tcVotes}
\DeclareGlobalVariable{hash}
\DeclareGlobalVariable{sender}
\DeclareGlobalVariable{votePrefixes}
\DeclareGlobalVariable{voteData}
\DeclareGlobalVariable{signature}
\DeclareGlobalVariable*[storageRequirement]{\ensuremath{S}}

\DeclareGlobalVariable{batch}
\DeclareGlobalVariable{batches}
\DeclareGlobalVariable{poa}
\DeclareGlobalVariable{poas}
\DeclareGlobalVariable{poaVotes}
\DeclareGlobalVariable{myBatches}
\DeclareGlobalFunction{OnNewBlock}
\DeclareGlobalFunction{FetchQCData}
\DeclareGlobalFunction{VerifyPoA}

\DeclareGlobalVariable*[tEnter]{\ensuremath{t_{0}}}
\DeclareGlobalVariable*[rLast]{\ensuremath{r_{\VariableStyle{last}}}}

\DeclareGlobalVariable*[quorumSize]{\SupMQuorumInline}

\newcommand{\roundTimeoutDuration}{\ensuremath{(4+\epsilon)\Delta}}
\newcommand{\qcVoteTimerDuration}{\ensuremath{\epsilon\Delta}}

\newcommand{\myparagraph}[1]{\smallskip\subparagraph*{\textbf{#1.}}}

\newcommand{\TODO}{\textcolor{red}{\textbf{\smaller{TODO}}}\xspace}

\definecolor{sky}{RGB}{50,150,255}
\definecolor{crimson}{RGB}{200,20,60}

\algnewcommand{\Message}[1]{\ensuremath{\langle}#1\ensuremath{\rangle}}
\algnewcommand{\Multicast}[1]{\textbf{multicast }\Message{#1}}
\algnewcommand{\Send}[2]{\textbf{send }\Message{#1}\textbf{ to }{#2}}
\algnewcommand{\IfThen}[2]{\textbf{if} {#1} \textbf{then} {#2}}
\algnewcommand{\IfThenElse}[3]{\textbf{if} {#1} \textbf{then} {#2} \textbf{else} {#3}}

\newcommand{\tnot}{\textbf{not}~}
\newcommand{\tand}{\mathbin{\textbf{ and }}}
\newcommand{\tor}{\mathbin{\textbf{ or }}}

\algblockdefx{Parameters}{EndParameters}{\textbf{parameters:}}{}
\algtext*{EndParameters}

\algblockdefx[Function]{Function}{EndFunction}[2]{\textbf{function} #1(#2):}{}
\algtext*{EndFunction}

\algblockdefx[Operation]{Operation}{EndOperation}[2]{\textbf{operation} #1(#2):}{}
\algtext*{EndOperation}

\algblockdefx[Upon]{Upon}{EndUpon}[1]{\textbf{upon} #1:}{}
\algtext*{EndUpon}

\algblockdefx[UponStart]{UponStart}{EndUpon}{\textbf{upon start of the protocol}:}{}
\algtext*{EndUpon}

\algblockdefx[UponReceiving]{UponReceiving}{EndUpon}[2]{\textbf{upon receiving} \Message{#1} \textbf{from} #2:}{}
\algtext*{EndUpon}

\algblockdefx[UponTimerExpires]{UponTimerExpires}{EndUpon}[1]{\textbf{upon} #1 \textbf{timer expires}:}{}
\algtext*{EndUpon}

\algblockdefx[Variable]{Variable}{EndVariable}{\textbf{Variables:}}{}
\algtext*{EndVariable}

\algblockdefx[Objective]{Objective}{EndObjective}{\textbf{Objective:}}{}
\algtext*{EndObjective}

\algblockdefx[ForConstraint]{ForConstraint}{EndForConstraint}[1]{$\forall #1:$}{}
\algtext*{EndForConstraint}

\algblockdefx[Constraint]{Constraint}{EndConstraint}{\textbf{Constraints:}}{}
\algtext*{EndConstraint}

\algblockdefx[Where]{Where}{EndWhere}{\textbf{Where:}}{}
\algtext*{EndWhere}

\newcounter{statementcnt}
\setcounter{statementcnt}{0}
{
    \stepcounter{statementcnt}%
    \begin{myframe}{Problem statement {\arabic{statementcnt}}\ifthenelse{\equal{#1}{}}{}{ (#1)}}%
}{
    \end{myframe}%
}

\newlength{\dhatheight}

\newcommand{\ass}{{\sf AS}}

\newcommand{\psign}{\ensuremath{{\sf PSign}}}
\newcommand{\pver}{\ensuremath{{\sf PVer}}}
\newcommand{\ver}{\ensuremath{{\sf Ver}}}
\newcommand{\comb}{\ensuremath{{\sf Comb}}}

\newcommand{\pk}{{\sf pk}}
\newcommand{\sk}{{\sf sk}}
\newcommand{\cM}{\ensuremath{{\mathcal M}}}

\newtheorem{definition}{Definition}

\settopmatter{printacmref=false} 
\setcopyright{none}
\settopmatter{printfolios=true} 
\renewcommand\footnotetextcopyrightpermission[1]{} 

\begin{document}

\title{\BoldRaptr: Prefix Consensus for Robust High-Performance BFT}
\author{Andrei Tonkikh}
\authornote{Equal Contribution.}
\affiliation{
  \institution{Aptos Labs}
  \country{}
}
\author{Balaji Arun}
\authornotemark[1]
\affiliation{
  \institution{Aptos Labs}
   \country{}
}
\author{Zhuolun Xiang}
\affiliation{
  \institution{Aptos Labs}
  \country{}
}
\author{Zekun Li}
\affiliation{
  \institution{Aptos Labs}
  \country{}
}
\author{Alexander Spiegelman}
\affiliation{
  \institution{Aptos Labs}
  \country{}
}

\begin{abstract}

In this paper, we present {\Raptr}--a Byzantine fault-tolerant state machine replication (BFT SMR) protocol that combines strong robustness with high throughput, while attaining near-optimal theoretical latency. {\Raptr} delivers exceptionally low latency and high throughput under favorable conditions, and it degrades gracefully in the presence of Byzantine faults and network attacks.

Existing high-throughput BFT SMR protocols typically take either \emph{pessimistic} or \emph{optimistic} approaches to data dissemination: the former suffers from suboptimal latency in favorable conditions, while the latter deteriorates sharply under minimal attacks or network instability. {\Raptr} bridges this gap, combining the strengths of both approaches through a novel Prefix Consensus mechanism.

We implement {\Raptr} and evaluate it against several state-of-the-art protocols in a geo-distributed environment with 100 replicas.
\Raptr achieves 260,000 transactions per second (TPS) with sub-second latency under favorable conditions, sustaining 610ms at 10,000 TPS and 755ms at 250,000 TPS. It remains robust under network glitches, showing minimal performance degradation even with a 1\% message drop rate.
 
\end{abstract}

\maketitle

\pagestyle{plain} 

\section{Introduction}\label{sec:intro}

State machine replication (SMR)---the abstraction of a single infallible machine ``in the sky'' implemented on top of many fault-prone machines (replicas) connected by a network---is the holy grail of distributed systems. In particular, Byzantine fault-tolerant (BFT) SMR forms the basis of modern blockchain systems.

Most existing BFT State Machine Replication (SMR) systems adopt a leader-based approach.
In this paradigm, the protocol proceeds in numbered rounds (or ``views''), with each round assigning a single replica as the leader.
The leader proposes a new block of transactions to extend the ever-growing chain, and the other replicas vote on the proposal and commit the block after reaching consensus.

This paradigm enables an optimal good-case latency of 3 message delays under low load~\cite{pbft,abraham2021good,kuznetsov2021revisiting}.
However, if implemented naïvely, a single replica--i.e. the leader--is responsible for broadcasting the entire block, which limits throughput to the leader’s outbound bandwidth divided by the total number of replicas.
As the system scales, this bottleneck causes throughput to degrade proportionally, making the approach unsuitable for the high-throughput demands of modern large-scale blockchain systems.


Motivated by the need for robust, high-throughput, and low-latency Byzantine Fault Tolerant (BFT) protocols for blockchain systems, both academia and industry have increasingly explored designs based on directed acyclic graphs (DAGs)~\cite{aleph, allyouneed}.
These protocols offer a promising alternative to traditional leader-based designs through their leaderless DAG structure, which enables parallel data dissemination, mitigates the leader bottleneck, and achieves higher throughput.
Their asynchronous architecture further enhances resilience to network disruptions and replica failures, making them particularly well-suited for decentralized environments.
In particular, early systems such as Narwhal/Tusk\cite{narwhaltusk} and Bullshark\cite{bullshark} demonstrate up to $100\times$ throughput improvements over leader-based protocols in large-scale deployments.


The major drawback of early DAG-based BFT systems, however, was their significant latency overhead compared to the fastest leader-based protocols such as PBFT~\cite{pbft} and its derivatives.
Subsequent DAG-based protocols sought to improve latency~\cite{spiegelman2024shoal, shoal++, Sailfish, mysticeti}, but none achieved optimal theoretical latency while preserving the high throughput and robustness of earlier designs.
For instance, Mysticeti~\cite{mysticeti} reduced latency substantially by omitting the node certification step during DAG construction.
However, this came at the cost of robustness: even a 0.05\% message loss can increase latency by $10\times$~\cite{shoal++}, due to the need to fetch missing DAG nodes on the critical path.

This paper introduces {\Raptr}, a novel BFT system designed to simultaneously achieve robustness, high throughput, and optimal theoretical latency.
{\Raptr} builds on a variation of the Jolteon protocol~\cite{jolteon} to optimize latency, while leveraging parallel data dissemination to boost throughput.
Its core innovation lies in integrating the low-latency benefits of leader-based protocols with the high throughput of DAG-based BFT systems—without compromising resilience to adversarial failures or adverse network conditions.


\subsection{Technical Overview}\label{sec:intro:overview}

%
We measure ordering latency as the time since a transaction is submitted to a replica until that replica receives an ordering confirmation. 
For leader-based protocols,
this includes transaction dissemination latency, \queuing for transactions to be included in a leader's proposal, consensus latency for leader's proposal to be ordered, and post-processing that includes mempool notification and fetching transactions (if necessary). 

\paragraph{Baseline}
Our starting point is the latest open source version of the Aptos BFT protocol, which is a latency-improved version of the leader-based Jolteon~\cite{jolteon} consensus protocol, called \Baseline~\cite{jolteon-star}, equipped with a parallel data dissemination, called Quorum Store~\cite{quorumstore}.
\begin{itemize}[leftmargin=*]
    \item {\Baseline}~\cite{jolteon-star} reduces the common-case consensus latency of Jolteon~\cite{jolteon} from 5 to 3 message delays via a PBFT-style all-to-all voting~\cite{jolteon-star}. 
    The \queuing 
    for \Baseline equals that of Jolteon at one message delay on average, since leaders propose blocks every two message delays.

    \item Quorum Store~\cite{quorumstore} is a reliable-broadcast-based parallel data dissemination~\cite{narwhaltusk} to remove the leader bandwidth bottleneck and improve throughput. Each replica (in parallel) broadcasts a batch of transactions and the rest of the replicas persist the batch and send a signature back. When the sender receives a quorum of signatures, it aggregates them into a proof of availability (PoA) and broadcasts it, ensuring future retrieval of the batch. 
    Therefore, leaders can safely propose blocks 
    of\atadd{ small} PoA certificates\atremove{ (constant size)}
    instead of the full transactions data. 
\end{itemize}

As a result, the average ordering latency is 7 message delays: 3 for disseminating and certifying data, 
1 on average for \queuing, 
and 3 (the lower bound) for committing the block in consensus.\footnote{\atadd{Technically, there is also a small ``batch inclusion latency'', as part of the dissemination, equal, on average, to half of the batch creation interval. For simplicity, we ignore it in the theoretical discussion, but it will appear in the evaluations in \Cref{sec:eval}.}}

\paragraph{\BoldBabyRaptr}
The Quorum Store~\cite{quorumstore} approach removes the throughput bottleneck in leader-based consensus systems, allowing them to achieve throughput that is comparable to state-of-the-art DAG BFT protocols~\cite{mysticeti,shoal++}. 
However, Quorum Store introduces a latency of three message delays for data dissemination and certification prior to consensus, matching the overhead of consensus itself.
To reduce this cost, we first propose a naive optimistic Quorum Store variant, which we call {\BabyRaptr}.
The core idea is to merge the Quorum Store with the consensus logic. As before, all replicas broadcast their batches in parallel.
However, in {\BabyRaptr}, leaders propose batch metadata without waiting for a proof of availability (PoA).
To ensure data availability, replicas vote for a proposal only if they already possess the corresponding batch data locally. Otherwise, they must retrieve the missing data on the critical path before voting.
This design eliminates two message delays in the common case, reducing the latency to a single message delay.
However, it can suffer from high latency in the presence of faults or partial network disconnections, as data fetching becomes blocking.

\paragraph{\BoldRaptr}
Our main contribution is a novel consensus technique that addresses the problem of fetching data on the critical path.
The core idea is to allow replicas to vote on partial blocks.
Rather than waiting to fetch missing data, replicas may vote on a prefix of the proposed block consisting of only the batches they have locally.
The key safety property that underpins this design is \emph{prefix containment}.
Since different replicas may see different subsets of votes, they may commit different prefixes of the same proposed block.
However, as long as all committed prefixes are nested, safety is preserved by ensuring that the next round’s proposal extends the highest committed prefix observed so far.
For data availability (i.e., liveness), it suffices that at least one honest replica has the full data for the committed prefix.

In summary, the paper presents two key algorithmic innovations: (1) \emph{background data certification} integrated with the consensus mechanism, and (2) a novel \atreplace{sub-block}{prefix} voting scheme that allows \emph{partial block commits}.  


\subsection{Implementation and Evaluation}
We implemented {\BabyRaptr} and {\Raptr} within the publicly available Aptos codebase and evaluated them against a modified variant of the Aptos BFT protocol, as well as two state-of-the-art DAG-based BFT protocols: Shoal++~\cite{shoal++} and Mysticeti~\cite{mysticeti}. 
Our implementation incorporates several system-level optimizations, including network-level separation for consensus and data messages, efficient aggregate signatures, and a reputation mechanism based on heuristics.

Our evaluation shows that under favorable conditions, {\Raptr} can commit up to 260,000 transactions per second with sub-second latency in a 100-node geo-distributed deployment{\textemdash}achieving roughly $2\times$ the throughput of the state-of-the-art.
Furthermore, \Raptr consistently delivers low and stable latency throughout the full load spectrum: 610 ms at 10,000 TPS and 755 ms at 250,000 TPS.

In terms of robustness, we show that \Raptr experiences minimal to no performance degradation under partial or network-wide glitches, while existing protocols degrade significantly.

{\Raptr} bridges the gap between pessimistic and optimistic data dissemination approaches---achieving the high throughput of Aptos BFT (which uses a pessimistic Quorum Store) while matching the near-optimal latency of {\BabyRaptr} (which relies on optimistic batch proposals).

\section{Preliminary}
\label{sec:prelim}

\subsection{System Model} \label{sec:prelim:model}

We consider a system consisting of $n$ replicas: $\Pi=\{p_1, \dots, p_n\}$ connected by a network that allows any pair of replicas to exchange messages.
%
%
A malicious adversary can corrupt any of the replicas at any point in time, as long as the total number of corrupted replicas does not exceed $f$ where $n=3f+1$.
Once a replica is corrupted, the adversary can observe all of its internal state and takes full control over it.
All non-corrupted replica follow the protocol as specified.
A replica is called \emph{malicious} if it is corrupted by the adversary at any time during its execution, and \emph{honest} otherwise.

All replicas know the full \emph{membership} of the protocol, which consists of the identities of all participants and the meta-information of constant size per participant as required by the protocol.
This meta-information includes the cryptographic public keys, allowing the \atreplace{nodes}{replicas} to establish secure authenticated peer-to-peer channels.
For simplicity, in this paper, we consider a system with fixed membership.
However, the standard state machine replication reconfiguration techniques apply.

We assume a \emph{partially synchronous} network, that is, there must exist a publicly known communication delay upper bound $\Delta$ and an unknown moment in time called \emph{Global Stabilization Time (GST)} such that any message between two honest \atreplace{nodes}{replicas} sent after the GST is delivered and processed by the recipient within $\Delta$ time units after it has been sent.
Within these bounds, the network is controlled by the adversary. 
It can choose when GST occurs, drop or delay any messages sent before GST, and delay messages sent after GST within the upper limit of $\Delta$.

The adversary and all \atreplace{nodes}{replicas} are assumed to be computationally bounded.
More specifically, we assume that they cannot break any of the cryptographic primitives used in the protocol.

We define {\em common case} as the scenario in which the network is synchronous (after GST) and all replicas are honest.


\subsection{Definitions}\label{sec:prelim:def}
We focus on the Byzantine Atomic Broadcast (BAB) problem. 
\begin{definition}[Byzantine Atomic Broadcast] \label{def:bab}
Every replica $p\in\Pi$ can broadcast messages by calling $\aBcast(m, \sn)$. 
Every replica $p\in\Pi$ can output $\aDeliver(m, \sn, q)$, where $m$ is a message, $\sn \in \mathbb{N}$ is a sequence number, and $q\in \Pi$ is the replica that called the corresponding $\aBcast(m, \sn)$. 
The Byzantine atomic broadcast guarantees the following properties:
\begin{itemize}[leftmargin=*]
    \sloppypar
    \item Totality: If an honest replica $p$ outputs $\aDeliver(m, \sn, q)$, then every honest replica $p'$ eventually outputs $\aDeliver(m, \sn, q)$.

    \item Non-Duplication: For each sequence number $\sn \in \mathbb{N}$ and replica $q \in \Pi$, an honest replica $p$ outputs $\aDeliver(m, \sn, q)$ at most once regardless of $m$. 

    \item Validity: If an honest replica $q$ calls $\aBcast(m, \sn)$, then every honest replica $p$ eventually outputs $\aDeliver(m, \sn, q)$.

    \item Total order: If an honest replica $p$ outputs $\aDeliver(m, \sn, q)$ before $\aDeliver(m', \sn', q')$, then no honest replica $p'$ outputs $\aDeliver(m', \sn', q')$ before $\aDeliver(m, \sn, q)$.
\end{itemize}
\end{definition}

In our paper, we let message $m$ be a batch of blockchain transactions.



\subsubsection*{Latency}
To define latency metrics, we define an additional property that most Byzantine Atomic Broadcast protocols satisfy, including our protocol.
\begin{definition}[Block Proposal]
    The Byzantine Atomic Broadcast protocol has replicas proposing block proposals consisting of transactions, explicitly or implicitly~\footnote{For leader-based BAB protocols, the leader of a given round can explicitly propose a block proposal extending the previous proposal. 
    For DAG-based BFT protocols, there are also chosen leaders that can implicitly propose a block proposal which consists its proposed DAG node and all causally dependent DAG nodes that are not included in the previous implicit block proposal.}.
    To propose a block $B$, a replica multicasts $\langle \mPropose, B \rangle$.
\end{definition}

\begin{definition}[Latency Metrics]
For a message $m$ (or a transaction in the blockchain context), we define the following latency metrics.
    \begin{itemize}[leftmargin=*]
        \item Dissemination latency. \atrev{The duration from the moment an honest replica $p$ calls $\aBcast(m, \sn)$ until some honest replica multicasts $\langle \mPropose, B \rangle$ such that $B$ includes $m$ (directly or indirectly).}
        \item Consensus latency. The time duration from the first honest replica multicasts $\langle \mPropose, B \rangle$, to the last honest replica outputs $\aDeliver(m, \sn, p)$, \dadd{where $B$ includes $m$ (directly or indirectly)}.
        \item Ordering latency. The ordering latency is equal to the dissemination latency plus the consensus latency.
    \end{itemize}
\end{definition}

The dissemination latency includes the {\em \queuing}, which measures the delay for a transaction to be included in a block proposal. The \queuing is, \atreplace{in expectation}{on average}, half of the block time (duration between two consecutive block proposals).

Looking ahead, as  discussed in~\Cref{sec:raptr:performance}, {\Raptr} has a dissemination latency of 2 message delays (including \queuing of 1 message delay), and a consensus latency of 3 message delays (which is optimal~\cite{kuznetsov2021revisiting,abraham2021good}).
Thus, {\Raptr} has an ordering latency of 5 message delays.

\subsection{Cryptography Primitives}
\label{sec:prelim:crypto}

\begin{definition}[Non-Interactive Aggregate Signature]
\label{def:NIAS}
Let $\{\pk_1,\ldots,\pk_n\}$ be a vector of public key shares, and $\{\sk_1,\ldots,\sk_n\}$ be a vector of secret key shares.
Consider $n$ signers where the $j$-th signer has $(\pk,\{\pk_i\}_{i\in \atreplace{[n]}{\Pi}},\sk_j)$.
A non-interactive aggregate signature scheme $\ass$ for a finite message space $\cM$ and $n$ signers is a tuple of polynomial-time algorithms $\ass=(\psign,\pver,\comb,\ver)$ defined as follows:
\begin{enumerate}[leftmargin=*]
    %
    \item $\psign(\sk_i, msg)\rightarrow \sigma_i:$ The partial signing takes as input a secret key share $\sk_i$, and a message $msg\in \cM$. It outputs a signature share $\sigma_i$. 
    \item $\pver(\pk_i,msg,\sigma_i,)\rightarrow 0/1:$ The partial signature verification takes as input a public key share $\pk_i$, a message $msg$, and a signature share $\sigma_i$. It outputs 1 (accept) or 0 (reject).
    \item $\comb(\{\, (i, \sigma_i) \,\}_{i \in S})\rightarrow \sigma/\bot:$ The combine algorithm takes as input a set $S$ of signers and their signature shares $\{\sigma_i\}_{i\in S}$. It outputs either a signature $\sigma$ or $\bot$.
    \item $\ver(\{(\pk_i, msg_i)\}_{i\in S}, \sigma)\rightarrow 0/1:$ The signature verification algorithm takes as input a set of tuples consisting public key shares $\pk_i$ and messages $msg_i$ of each signer $i$ in the set $S$, and a signature $\sigma$. It outputs 1 (accept) or 0 (reject).
\end{enumerate}
\end{definition}


Looking ahead, in the {\Raptr} protocol, each replica will sign for each block its hash and the prefix number of the received batches, i.e., $m=(H(B), \emph{prefix})$. 
Our implementation uses a simplified version of no-commit proofs~\cite{giridharan2021no} to reduce verification overhead\atadd{ and we group batch digests into ``sub-blocks'' to limit the number of possible prefixes}---see \Cref{sec:system:signature} for more details.

\section{Baseline: Aptos Consensus}\label{sec:baseline}
Our starting point and baseline for evaluation comparison is an existing consensus protocol currently deployed on the Aptos Blockchain~\cite{aptos-codebase}, which has two major components---{\em Quorum Store}~\cite{quorumstore} for data dissemination and {\Baseline}~\cite{jolteon,jolteon-star} for block ordering. 
%
{\Baseline} is a latency-improved version of Jolteon~\cite{jolteon} that reduces the common-case consensus latency from 5 to 3 message delays via a PBFT-style all-to-all voting~\cite{jolteon-star}.
Quorum Store~\cite{quorumstore} is a reliable-broadcast-based parallel data dissemination~\cite{narwhaltusk} to remove the leader bandwidth bottleneck and improve throughput. 


\subsection{Notations}\label{sec:baseline:notation}
We define a few notations used in BFT consensus protocols that are standard in the literature.
\begin{itemize}[leftmargin=*]
    \item {\em Rounds and leaders}. The protocol advances in rounds, such that in each round $r$ there is a designated replica $\leader_r$ that is the \emph{leader} in round $r$.
    \footnote{The mapping between rounds to leaders can be static (e.g., round robin) or dynamic (updated according to replicas' participation).
    }

    \item {\em Blocks}. A block $B$ contains a \emph{payload} of transactions, a round number $r$, a quorum certificate (QC), and an optional timeout certificate (TC). 
    
    \item {\em Quorum certificate}. A valid quorum certificate (QC) $\qc$ contains a valid aggregated signature $\qc.\signature$ on a string $\qc.\hash$. We say that a QC $\qc$ \emph{references} a block $B$ if $\qc.\hash = H(B)$, where $H$ is a cryptographic hash function. We define the round $\qc.\round$ to be the round of Block $B$.\atremove{ In {\Baseline}, we compare QCs by their rounds and say that $\qc_1 > \qc_2$ iff $\qc_1.\round > \qc_2.\round$.}
    
    \item {\em Timeout certificate}. A timeout certificate (TC) aggregates a quorum of round timeout messages. A valid TC $\tc$ contains an aggregated signature $\tc.\signature$ on the round number $r$ in which it was formed, and a vector $V$ 
    \dadd{containing the round numbers of the highest QCs that replicas attached to their respective timeout messages}.
    We denote $r$ as $\tc.\round$ and \emph{tc.highest\_qc\_round} to be the highest round in $V$. 
\end{itemize}
In Section~\ref{sec:raptr}, we will extend the Block, QC, and TC notations to describe the {\Raptr} protocol.

\subsection{{\Baseline}}\label{sec:baseline:protocol}
Below, we informally present the {\Baseline} protocol.
Although we later formally present {\Raptr} in a full standalone description, the informal description of {\Baseline} helps build up the intuition of \Raptr's correctness and design choices.

The goal of the {\Baseline} protocol is for replicas to agree on an ever-growing chain of blocks $B_0,B_1,...$ such that each block $B_i$, $i>0$, contains, among other things, a QC $\qc$ that references Block $B_{i-1}$. The block $B_0$ contains the predefined genesis QC.

\paragraph{Common case protocol}
In {\Baseline}, a replica advances to round $r$ once it sees a valid \emph{EntryReason}, which is either a QC or a TC from round $r-1$.
When the leader $L_r$ of round $r$ advances to round $r$, it forms a block $B$ that contains the highest QC it has ever seen and an \emph{EntryReason} to advance to round $r$. 
The leader $L_r$ of round $r$ broadcasts its block $B$ to all other replicas.

When a replica receives a block $B$ from $L_r$ in round $r$ for the first time, it first advances to round $r$ ({if it has not done so yet}) by verifying \emph{EntryReason} as follows:
\begin{itemize}[leftmargin=*]
    \item if \emph{EntryReason} is a QC $qc$: verify that $\emph{qc.round} = r-1$. 
    \item if \emph{EntryReason} is a TC $tc$: verify that $\emph{tc.round} = r-1$ and $\emph{B.qc.round} \geq \emph{tc.highest\_qc\_round}$. This means that the TC was formed in the previous round and the QC in the proposed block is at least as high as \emph{tc.highest\_qc\_round}, which is the highest round of all the QCs attached to the timeout messages that form the TC.
\end{itemize}
Then, it signs $H(B)$ and broadcasts its \emph{QC-vote} to all.
Upon collecting a quorum of \emph{QC-votes}, replicas aggregate a QC $qc$ that references $B$ and broadcast a commit certificate vote (\emph{CC-vote}) to all replicas.
In addition, they advance to round $r+1$, and $L_{r+1}$ also broadcasts its block for round $r+1$.
Each replica that collects a quorum of \emph{CC-votes} or receives an aggregated commit certificate (CC) can commit~$B$.

\paragraph{Failure case} When replicas do not receive a block from $L_r$ or enough QC-votes to form a QC in time (e.g., $L_r$ is faulty or the network experiences disconnections), replicas broadcast a timeout message that contains the highest round QC they have seen so far. 
When receiving a quorum of timeout messages, replicas form a TC and advance to the next round without committing a block in the current round. 

\paragraph{Performance}
In the common case, the consensus latency of {\Baseline} is 3 message delays: one message delay to propose a block, one message delay to broadcast a QC-vote, and another message delay to broadcast a CC-vote.
The ordering latency is five message delays: one message delay for replicas to transmit their transactions to the next leaders, one message delay (in expectation) for transactions to be included in a proposed block (\queuing), and three message delays for consensus latency. 
The extra message delay for \queuing comes from the fact that a new block is proposed in {\Baseline} every two message delays, which means that on average (under the uniform arrival distribution) each transaction waits one message delay for the proposal. 
{\Baseline} has a quadratic message complexity.

\subsection{Quorum Store}\label{sec:qs}

A recent breakthrough in BFT throughput was the realization that data must be disseminated in parallel.
For example, DAG-based consensus protocols~\cite{narwhaltusk} were able to improve throughput over state-of-the-art leader-based consensus protocols by $50\times$ with their leaderless DAG structure, in which all validators disseminate nodes in parallel.


Quorum Store~\cite{quorumstore}, currently deployed in the Aptos network to scale {\Baseline}, adopts the core idea of leaderless data dissemination to provide a parallel dissemination layer that can be integrated with any leader-based consensus protocol.


The protocol is symmetric, where all validators repeat the following in parallel:
\begin{itemize}[leftmargin=*]
    \item Create a batch of transactions and broadcast them to all validators.
    \item Upon receiving a batch, persist it to storage, sign its hash, and send the signature back to the {batch author}.
    \item Upon receiving a quorum of signatures for a batch, aggregate them to form a \emph{proof-of-availability { (PoA)}} and broadcast the proof to all validators. 
\end{itemize}

\subsection{Putting Things Together}


When enhanced with the Quorum Store layer, {\Baseline} includes only a set of \atreplace{availability proofs}{proofs of availability}, not the full transaction data.
Each proof guarantees that a quorum of honest validators has stored the corresponding data, allowing it to be fetched later (for example, before execution) if a validator does not have it locally.
Since the proofs are significantly smaller than the raw data, the block sizes shrink substantially and the leader bandwidth is no longer a bottleneck.
As a result, Quorum Store significantly improves the throughput of {\Baseline} while preserving its robustness.

\paragraph{Performance}
Theoretically, the quorum store protocol requires 3 message delays, linear communication per sender, and quadratic communication overall, while the {\Baseline} protocol has a quadratic message complexity. 
The average ordering latency is 7 message delays: a dissemination latency of 3 message delays,
an average \queuing of 1 message delay,
and a consensus latency of 3 message delays.
Compared to {\Baseline} without Quorum Store, the ordering latency increases by 2 message delays, as a trade-off for higher throughput.
We present the results of the evaluation of this protocol in~\Cref{sec:eval}.

\section{\BoldBabyRaptr} \label{sec:baby}

Although Quorum Store scales the throughput of {\Baseline} by an order of magnitude~\cite{quorumstore}, it also increases its ordering latency by 2 message delays.
In this paper, we merge Quorum Store into the {\Baseline} protocol in a way that eliminates the latency penalty of Quorum Store while maintaining its high throughput and robustness.

We start here by describing {\BabyRaptr}, which achieves {\em high throughput without sacrificing latency} in the common case. In the next section, we describe the full {\Raptr} protocol that adds resilience to bad networks and faulty validators.

The core idea is to integrate \atreplace{part of the Quorum Store logic}{data availability certification} directly into the consensus protocol while\atadd{ also} running the original Quorum Store asynchronously, off the critical path.
As before, all validators broadcast batches of transactions, and the quorum store protocol proceeds asynchronously in the background to generate proofs of availability. 
To save two message delays, \atrev{leaders in {\BabyRaptr} additionally propose digests of the batches they received from the replicas {\em without} waiting for the quorum store proofs.}

To integrate the \atreplace{proofs of availability generation on}{data availability certification with} the consensus protocol, replicas sign proposals as part of the consensus protocol only if they locally persist the batches behind the included digests (this is similar to the Quorum Store logic).
Therefore, a QC that refers to Block $B$ also serves as \atreplace{a proof of availability}{a data availability proof} for the \atreplace{batch digests in $B$}{batches referenced in $B$}. 
Thus, if block $B$ is committed, then it is guaranteed that all its data is available in the system\atremove{, which is exactly the guarantee that quorum store provides}.

While {\BabyRaptr} reduces ordering latency by two message delays, it introduces a key drawback: replicas cannot sign a consensus proposal unless they locally possess all the data referenced by the block's digests.
If any data is missing, replicas must fetch it on the critical path, incurring additional latency or triggering leader timeouts—since data retrieval takes at least two message delays.
This situation can arise in the presence of poor network conditions or slow/malicious replicas.
The issue is exacerbated when batch creators behave maliciously, for example, by sending batches exclusively to the leader.\footnote{A similar vulnerability exists in uncertified DAG BFT protocols such as Mysticeti\cite{mysticeti}.}

In the next section, we describe how {\Raptr} elegantly addresses this issue. 
However, a practical approach to mitigate this issue is to revert to the baseline version.
Since Quorum Store runs in the background, each leader can locally decide to go back to proposing proofs whenever it observes performance degradation.
Moreover, whenever a proof for a digest is available, replicas should always include it instead of just the digest.
In practice, each block can be a combination of proofs and digests that depend\atadd{s} on local heuristics. 
System considerations are discussed in~\Cref{sec:system}.





\section{\BoldRaptr}\label{sec:raptr}

In this section, we present the {\Raptr} protocol. 
In \S\ref{sec:raptr:intuition}, we provide an overview of the protocol, and in \S\ref {sec:raptr:description}, we give a full description.
Due to space limitations, the full pseudocode and correctness proofs are deferred to \Cref{sec:pseudocode,sec:proof}, respectively.

\subsection{Protocol Intuition}\label{sec:raptr:intuition}

\begin{figure*}[t!]
    \centering
    \includegraphics[width=\linewidth]{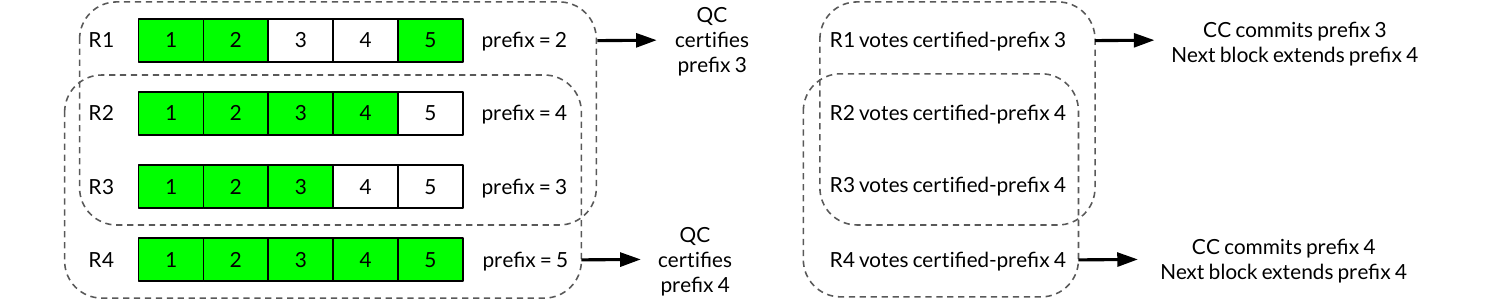}
    \caption{Illustration of the non-binary voting on prefixes in {\Raptr}, as described in Section~\ref{sec:raptr:intuition} and~\ref{sec:raptr:description}. Four replicas ($R_1,\dots,R_4$) receive different subsets of batches (green) of the same block, and vote on the longest available prefix. A quorum of QC-votes form a quorum certificate (QC), certifying a block's prefix. In this example, suppose $S=2$, QC-votes from $R_1,R_2,R_3$ certify prefix 3, while QC-votes from $R_2,R_3,R_4$ certify prefix 4. After forming a QC, replicas then vote {to commit the certified prefix}. A quorum of CC-votes forms a commit certificate (CC), committing a prefix. Here, CC-votes from $R_1,R_2,R_3$ commit prefix 3, and those from $R_2,R_3,R_4$ commit prefix 4. The next block proposal extending either CC will extend the maximum certified prefix in the quorum, which is 4 in both case.}
    \label{fig:prefix}
\end{figure*}

The {\Raptr} protocol enhances resilience by addressing the limitations of {\BabyRaptr}, while preserving its latency benefits.
{\BabyRaptr} removes Quorum Store proof-of-availability certification from the critical path to reduce ordering latency, but becomes vulnerable to slow or malicious batch creators that fail to distribute batches to replicas.
The core issue lies in {\em binary} block voting, a mechanism common to most BFT consensus protocols: replicas vote either for the entire block or not at all.
In {\BabyRaptr}, a replica will only vote once all referenced batches are available; if even a single batch is missing, it refuses to vote.

{\Raptr} overcomes this limitation by introducing {\em non-binary} voting on \atreplace{sub-blocks}{prefixes}.
Instead of requiring full availability of all batches in a block before voting, replicas vote on the longest available prefix of batches they have received. This ensures that consensus continues to make progress without stalling on missing batches.
The challenge in the {\Raptr} design lies in achieving consensus with non-binary votes. 

Conceptually, {\Raptr} reaches agreement on an ever-growing sequence of batches, rather than on the specific blocks proposed by each leader.
Each leader still proposes a block of batches that extends the previously committed batch sequence, but replicas—due to differing local views—may commit different \emph{prefixes} of these batches.
The safety of {\Raptr} is ensured by the prefix containment property: as long as the committed prefixes at different replicas extend one another, the protocol maintains agreement.

To enforce safety, {\Raptr} has the following key constructions. \Cref{fig:prefix} provides an illustration.

\begin{itemize}[leftmargin=*]
    \item In each round, along with the proposed block digest, replicas vote on the longest available prefix of batches they have received. A quorum certificate (QC) contains a quorum of votes for the same block and can certify the availability of a prefix of batches referenced in the block.
    %
    %
    We pick an \emph{availability requirement} parameter $\storageRequirement \ge f+1$ and say that a QC \emph{certifies} prefix $k$ if at least $S$ replicas voted for prefix $k$ or larger.
    It is easy to see that all batches in the certified prefix are retrievable as at least $S-f \ge 1$ honest replicas have them stored locally.

    \item Then, replicas vote to commit a QC prefix. Each vote includes the prefix certified by the QC, and the QC itself to justify the vote.
    The aggregate signature from a quorum of such votes commits the {\em minimum} prefix attached to the votes.
    We refer to it as a \emph{commit certificate}.
    
    \item 
    {If progress stalls, a replica can timeout, stop voting in this round, and send a timeout message with its highest QC.}

    \item The next-round leader enters the new round in three ways:
    \begin{enumerate}[leftmargin=*]
        \item The best case is when the next-round leader obtains a \emph{full-prefix} QC (one that certifies all batches proposed in the block).
        Then, it propose\atadd{s} a new block extending that QC, without having to wait for the commit votes.
        
        \item In case the next round leader fails to get the full-prefix QC, it enters the new round once it obtains a commit certificate.
        It then extends the \emph{maximum} prefix attached to the commit certificate votes.

        \item Finally, the worst case that can only happen in case of network asynchrony or a faulty leader is when the previous round is timed out. Upon collecting a quorum of timeout messages, the next-round leader
        proposes a block extending the \emph{maximum} prefix certified by the QCs in the timeouts.
    \end{enumerate}

    
\end{itemize}

\subsection{Protocol Description}\label{sec:raptr:description}


This section provides a high-level description of {\Raptr}. 
Figure ~\ref{fig:prefix} illustrates the logic of voting on and committing block prefixes.  

{\Raptr} progresses in rounds, starting from round 1, which extends the genesis block. 
A replica advances to a new round when it has a valid entry reason as defined by the protocol. 
For clarity, we describe valid entry reasons at the end of this section.
\atrev{A key distinction from classic pipelined leader-based protocols is that, instead of forming a chain of blocks, {\Raptr}
directly constructs a sequence of batches, built out of \emph{block prefixes}.}


    \paragraph{Entering a New Round} A replica resets local timers and sends its reason to the leader, allowing the leader to advance as well. If the replica is the leader, it proposes a block extending the highest {\em certified prefix} (defined below) it knows. 
    The block proposal includes the reason and the batches' metadata (pulled from the in-memory storage). The leader then broadcasts the block to all participants.
    
    \paragraph{Receiving a Block} Upon receiving the first block from the round-$r$ leader, a replica performs validity checks, stores the block, processes the attached quorum certificate (QC) (QC definition described below), and attempts to advance the round based on the reason contained in the block. The replica also starts its voting timer.
    
    \paragraph{Voting on a Block (QC-vote).} A replica may 
    vote on a block in its current round up to two times:
    \begin{enumerate}[leftmargin=*]
        \item When the voting timer expires after receiving the block\atadd{, if not yet QC-voted by the next rule}. 
        \item When all batches referenced in the block have been received.
    \end{enumerate}
    Each vote contains a signature on a message containing the vote type, block hash, round number, and the longest received batch prefix. Note that the timer is a configurable system parameter designed to allow replicas to receive additional batches, enabling them to vote on higher prefixes.
    In principle, we could have allowed the replicas to issue more than two QC-votes without violating safety, but it is unclear if that would result in any benefits for the protocol.

    \paragraph{Forming a Quorum Certificate (QC)} When a replica collects a quorum of QC-votes, it forms a QC, which {\em certifies} a prefix of batches\atremove{ (a sub-block)}. This certified prefix is computed as the $\storageRequirement$'th maximum prefix in the quorum of QC-votes, where $\storageRequirement \geq f+1$, ensuring that the certified prefix is retrievable from $\geq \storageRequirement -f \geq 1$ honest replicas. QCs are ranked by the (round, certified prefix) tuple. 
    {Each replica may form a QC up to two times:}
    \begin{enumerate}[leftmargin=*]
        \item If no QC has yet been formed or received for this round or a higher one.
        \item If the QC certifies all batches in the block (full-prefix QC).
    \end{enumerate}
    Similarly to QC-vote, we could have allowed \atreplace{to form}{forming} a QC more than two times without any safety violations.
    
    \paragraph{Voting on a QC (CC-vote)} Upon forming or receiving a QC, a replica updates its highest QC and issues a vote for the prefix certified by the QC (after checks). The vote includes the vote type, block hash, round number,\atadd{ and the} certified prefix. 
    The replica sends the corresponding QC with the CC-vote for recipients to verify the certified prefix. 
    
    \paragraph{Forming a Commit Certificate (CC)}
    \atrev{When a replica collects a quorum of CC-votes for the same block, it forms a commit certificate (CC). A valid CC commits the minimum certified prefix from the votes it contains and, recursively, all block prefixes in the chain of parents. Committing a block prefix means committing all of the proofs of availability in the block and the prefix of batches whose digest are included without the proofs of availability.}
    
    \paragraph{Round Timeout} When a round timeout expires, replicas broadcast signed timeout messages with their highest QCs. A replica that gathers a quorum of timeout votes forms a timeout certificate (TC).

    \paragraph{Valid reasons for entering a new round} A replica enters the next round whenever it has a valid reason, which can be:
    \begin{enumerate}[leftmargin=*]
        \item A full-prefix QC. The next block extends the full previous block.
        \item A valid commit certificate (CC). The next block extends the maximum certified prefix in the votes.
        \item A valid timeout certificate (TC). Similarly to the CC case, the next block extends the maximum prefix certified by the QCs in the voters. 
    \end{enumerate}

\subsection{Correctness intuition}\label{sec:raptr:correctness}
Formal proofs can be found in Appendix~\ref{sec:proof}. Here we provide high-level correctness intuition.
The safety of the protocol relies on the prefix containment property---the committed prefixes at different replicas extend each other.
To ensure this, recall that {\Raptr} commits the minimum prefix in a quorum of commit certificate votes and extends the next round proposal on the maximum prefix in a quorum of commit certificate/timeout votes.
Due to quorum intersection, any valid proposal must extend the committed prefix.
Additionally, it is always safe to extend a full-prefix QC\atadd{ of the previous round} as no larger-prefix QC could possibly exist and, therefore, no larger-prefix QC could possibly be committed in that round.

\subsection{Performance Analysis}\label{sec:raptr:performance}
We analyze the latency of \Raptr in the common case. 
The consensus latency spans 3 message delays: one each for the proposal, QC vote, and CC vote.
The dissemination latency spans 2 message delays: one to send the transaction batch, and another for \queuing (as the protocol proposes block every two message delays).
Therefore, the ordering latency of \Raptr is 5 message delays.

\paragraph{Decoupling of availability and safety quorums}~\label{sec:raptr:decouple}
A key intuition behind \Raptr's high throughput under high load lies in its \atreplace{sub-block}{prefix} voting design. Under high load, data dissemination becomes a bottleneck, and not all \atreplace{nodes}{replicas} may receive all the data of the block (full block) during voting. Traditional BFT protocols require at least a quorum ($2f+1$) of \atreplace{nodes}{replicas} to have the full block for it to be committed. In contrast, \atreplace{\Raptr relaxes this requirement: only $\storageRequirement \geq f+1$ \atreplace{nodes}{replicas} need the full block to generate QCs certifying it. Once a quorum of \atreplace{nodes}{replicas} votes for such QCs, the full block can be committed.}{in {\Raptr}, a full-prefix QC can be formed with only $\storageRequirement \geq f+1$ votes for the full block, and the rest of the votes in the quorum ($2f+1 - \storageRequirement$) can be partial.} 
This distinction becomes evident in~\Cref{fig:failure-free-performance} of~\Cref{sec:eval}, where we compare the peak throughput of \BabyRaptr and \Raptr.

\section{System Considerations} 
\label{sec:system}

In this section, we present the system-level contributions that complement and enhance \Raptr's performance in real-world deployments.

\subsection{Reducing \queuing}
\label{sec:system:queuing}

\Raptr creates a block every two message delays with a \queuing of one message delay.
However, optimistic proposal techniques~\cite{doidge2024moonshot} can propose blocks every message delay, thus reducing the \queuing to half a message delay.
Alternatively, similar to Shoal++~\cite{shoal++}, $K$ instances of \Raptr can be run in parallel, with each instance offset by $2/K$ message delays.
This can effectively reduce the \queuing by a factor of $K$, as long as the instances are aligned such that there is an equal interval between the block proposals of two consecutive instances.
The evaluation omits this optimization, as both techniques are orthogonal to \Raptr and can be integrated independently.



\subsection{Aggregate signatures and sub-blocks}
\label{sec:system:signature}

In \Raptr, replicas can vote on different prefixes, resulting in non-identical QC votes, rendering efficient multisignature schemes unsuitable for verification.
Instead, efficient aggregate signature schemes are necessary to ensure that signature verification does not become a bottleneck.
Below are concrete realizations of such schemes (\Cref{def:NIAS}).

\paragraph{BGLS aggregate signatures} In this scheme~\cite{boneh2003aggregate}, each signer holds a tuple of public and secret key shares. Both the signature share and the aggregate signature have a constant size. Verification of partial signatures requires $O(1)$ pairing operations. However, the aggregate signature verification takes $O(M)$ pairing operations, where $M$ denotes the maximum number of \atreplace{sub-blocks allowed in a block}{possible prefixes}.\atremove{ This aggregate signature scheme, while straightforward, was very expensive in our experiments and emerged as a bottleneck for fast consensus.}

\paragraph{No-commit proofs~\cite{giridharan2021no}} The no-commit proofs scheme below improves the aggregate signature verification cost to $O(1)$ pairing operations, with the expense of using more public and secret key shares. This scheme is based on a virtualization of multi-signatures which require the same message to be signed and is an aggregate signature scheme tailored for the use case where the messages signed by each replica consist of a common message (\atadd{e.g., the }block hash) and \atreplace{a distinct message}{distinct \emph{tags}} of bounded size (\atreplace{prefix number of received batches}{e.g., the prefixes}). 

The signers will sign the common message using different secret key shares corresponding to the \atreplace{distinct message}{tag}. More specifically, each signer holds $K$ tuples of public and secret key shares, each corresponding to a\atadd{ possible}\atremove{ batch} prefix number~\footnote{\atreplace{The no-commit proof has optimizations}{The original scheme of~\cite{giridharan2021no} includes an optimization} to reduce the number of keys by signing with respect to the binary representation of the prefix number. We simplify the implementation by omitting the optimization, since\atadd{ it slightly slows down the verification and,} in our use case, the number of sub-blocks is small.}. Each signer will sign the block hash using the secret key share according to its received\atremove{ batch} prefix number $l_i$, e.g., for a signer \atreplace{that receives batches $1,2,3$ but not $4$, it will sign the block hash using its third secret key share}{voting for block $B$ with prefix $3$, it will sign $H(B)$ with its third secret key}. Signature verification will be evaluated based on the claimed prefix numbers and corresponding public key shares. As a result, \atreplace{signature verification has $O(1)$ cost, which is similar to multi-signatures}{the signature verification cost is the same as for a simple multi-signature}.


For the implementation of \Raptr, we use the no-commit proofs~\cite{giridharan2021no} for our aggregate signature scheme to reduce the signature verification overhead.

\begin{atreview}
\paragraph{Sub-blocks} As both schemes discussed above are sensitive to the number of possible prefixes, we reduce this number to a constant by grouping the batch digests into $\nSubBlocks$ \emph{sub-blocks}.
Thus, the prefix is a number between $0$ and $\nSubBlocks$, where $0$ indicating that only the proofs of availability are to be committed and $\nSubBlocks$ indicating that the full block is to be committed.
\end{atreview}

\subsection{Network Optimizations}
\label{sec:system:network}

BFT systems commonly employ full-mesh connectivity, where each replica maintains a TCP or QUIC connection with every other replica to enable direct message exchange.

In \Raptr, replicas continuously transmit messages of varying types and sizes. Quorum Store batch messages, which carry raw transaction payloads, can be up to hundreds of kilobytes in size. In contrast, non-data and consensus messages are only at most a few kilobytes. This imbalance makes message prioritization critical for performance.

When small consensus messages share a connection with large batch messages, they experience delays from both transmission overhead and queuing at the receiver.
To avoid this, \Raptr uses three separate TCP connections per replica pair: one each for quorum store batch messages, quorum store non-data messages, and consensus messages. This separation ensures that large data transfers do not interfere with latency-sensitive traffic.

In existing DAG-based protocols~\cite{narwhaltusk,bullshark,spiegelman2024shoal,shoal++,mysticeti}, payload dissemination is tightly coupled with DAG construction (therefore with consensus), making network-layer separation inherently challenging. 
While certified DAG protocols\cite{bullshark,spiegelman2024shoal,shoal++} may permit some decoupling, their consensus messages remain large---often containing a linear number of node certifications from the previous round---reducing the effectiveness of the network-layer separation. 
In contrast, \Raptr decouples batch dissemination from consensus, allowing both to proceed asynchronously.
This decoupling also enables batches to be produced at a higher rate than consensus for better bandwidth utilization.
As a result, \Raptr can fully leverage dedicated network channels to isolate data and consensus traffic, improving performance by reducing contention and latency.



\subsection{Reputation}
\label{sec:system:reputation}

\paragraph{Leader reputation}
Non-Byzantine failures such as crash failures and partial network partitions are the most common form of failures in today's Blockchain systems. 
Therefore, these systems benefit from having a reputation system that de-prioritizes failed replicas as proposers~\cite{cohen2022aware}, reducing timeout opportunities, and ensuring progress in every round.
Briefly, such reputation systems track the proposal status and voters for a window of rounds on which every replica has a common view, and use this information to deterministically pick leaders every round. 
Previously failed leaders get fewer proposal opportunities proportional to the number of failures in the window. 

\paragraph{Batch creator reputation}
Similar to proposers, certain batch creators can be less reliable than others. A batch creator reputation system helps proposers only pick batches from reputed creators that are able to disseminate their batches to the replicas in a timely fashion. Since proposers are rewarded for proposing successfully, proposing batches that are missing would trigger QC-vote timeouts leading to prefix votes. This can reduce the block rate and the number of transactions ordered per round, increasing consensus latency. 

The Batch creator reputation works as follows. Whenever a replica timer QC-votes a prefix or issues a Timeout message, it attaches a bitmap of authors of the missing batches. This bitmap is agreed upon as part of the QC or TC certification process. To avoid Byzantine behavior, an author is blamed only if at least $f+1$ of the replicas blame it. The batches from blamed authors are only included after their quorum store availability proof is available.

Similarly, replicas also blame slow authors whose batches are unavailable at the time of proposal receipt.
Block proposers use this information to reorder \atreplace{sub-blocks}{batch digests}, placing slow authors toward the end of the block, helping \Raptr's prefix \atreplace{commitment}{commit} mechanism.

\paragraph{Minimum batch age}
Transmitting and processing large batch messages in the quorum store can take longer than transmitting and processing small consensus proposal messages. When the proposal arrives, the batch may be missing locally at the replicas, delaying QC-votes. 
Furthermore, triangle inequality violations~\cite{lumezanu2009triangle} in networks can induce the same effect: the delay between the batch creator and the replica is longer than the delay between the batch creator and the proposer plus the proposer to replica hops. 
To avoid waiting on the critical path of consensus, we configure a minimum batch age before the batch can be included in a proposal. This ensures that the batches have had a sufficient head start to disseminate through the system, so when consensus proposes the batches, the replicas have the batches locally and can cast their votes without waiting.


\section{Evaluation} 
\label{sec:eval}

We evaluate \Raptr against state-of-the-art protocols and answer the following questions through our analysis:
\begin{enumerate}[leftmargin=*]
    \item What is the performance of \Raptr under ideal conditions?
    \item What is the breakdown of \Raptr's latency compared to the Baseline?
    \item How does \Raptr behave under faulty conditions?
\end{enumerate}

\paragraph{Implementation}
We implemented \Raptr and \BabyRaptr by modifying the publicly available, production-grade source code of the Aptos blockchain\footnote{\url{https://github.com/aptos-labs/aptos-core}}.
The codebase includes an implementation of the Aptos Baseline protocol (\S\ref{sec:baseline}), which facilitated our development.
We adapted the code to run only the consensus components, disabling execution and other blockchain-specific modules.
This isolation enables accurate measurement of consensus performance without interference from other subsystems.
Furthermore, as described in \S\ref{sec:system:network}, we modified the network stack to support multiple TCP connections between replicas and routed specific message types to dedicated connections. 
The source code is available on Github.\footnote{\url{https://github.com/aptos-labs/aptos-core/tree/raptr-public}}

\begin{figure*}[!h]
    \centering
    \includegraphics[width=\textwidth]{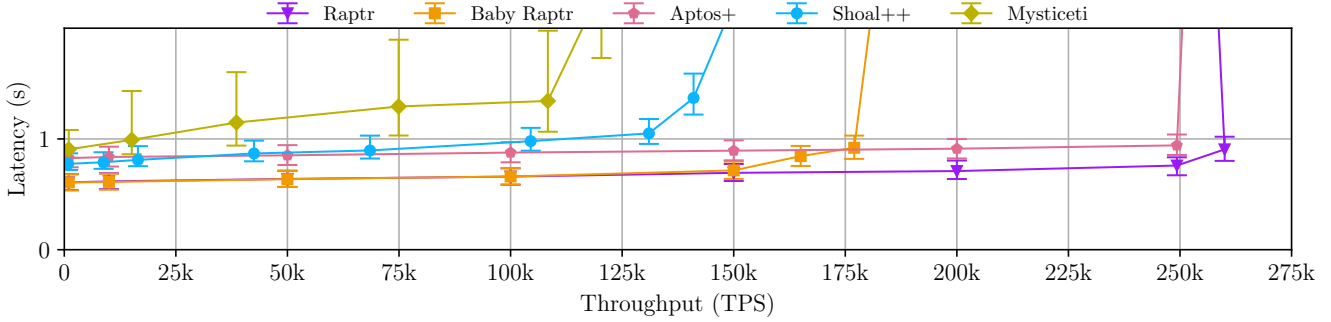}
    \vspace{-24pt}
    \caption{Common case performance of \Raptr versus other protocols. The points represent the 50th percentile latency and the error bars show the 25th and 75th percentile latencies respectively.}
    \label{fig:failure-free-performance}
\end{figure*}

\paragraph{Protocols under Test}
We compare the performance of \Raptr and \BabyRaptr against Shoal++~\cite{shoal++}, Mysticeti~\cite{mysticeti}, and \AptosPlus.
Shoal++ is a recent state-of-the-art certified DAG-based BFT protocol, and its prototype is available in the Aptos repository\footnote{\url{https://github.com/aptos-labs/aptos-core/tree/paper-shoal}}.
Mysticeti is an uncertified DAG-based protocol deployed in production by the Sui blockchain~\cite{sui}, and we use its open-source prototype\footnote{\url{https://github.com/asonnino/mysticeti}} in our evaluation.
\AptosPlus corresponds to the Baseline protocol (\S\ref{sec:baseline}) used in the Aptos blockchain, enhanced with the multi-TCP connection optimization described in \S\ref{sec:system:network}.
Note that Autobahn~\cite{giridharan2024autobahn} also separates data dissemination from consensus, similar to the Baseline protocol, and is therefore expected to exhibit similar performance to \AptosPlus under fault-free conditions with a well-engineered implementation.

\paragraph{Experimental Setup}
\sloppypar
We deployed a geo-distributed testbed using Google Cloud Platform (GCP) to emulate a decentralized network at global scale.
Our setup consists of 100 replicas evenly distributed across 10 GCP regions worldwide:
3 in Asia (\texttt{asia-northeast3}, \texttt{asia-southeast1}, \texttt{asia-south1}).
1 each in South America (\texttt{southamerica-east1}), South Africa (\texttt{africa-south1}), and Australia (\texttt{australia-southeast1}),
2 in the US (\texttt{us-west1}, \texttt{us-east1}),
and 2 in Europe (\texttt{europe-west4}, \texttt{europe-southwest1}).
Round-trip times between regions range from 25ms to 317ms.
Each replica runs on a \texttt{n2d-standard-64} VM instance with 64 vCPUs and 256GB of memory.

We use a variable number of client threads to generate load at different target throughputs.
Each client connects to a single replica and issues requests in an open-loop fashion to reach the desired TPS.
For example, a client sending a request each millisecond generates a 1000 TPS workload.
Each request consists of a 310-byte transaction representing a peer-to-peer transfer transaction.

We measure end-to-end latency from the time a transaction enters the mempool until the transaction is ready for execution and the mempool can be notified of its ordering.
Client-to-replica latencies are excluded, as they remain constant across protocols.
Each experiment lasted five minutes to allow the system to reach steady-state behavior.

The quorum store is configured to produce batches every 150ms, with a maximum of 450 transactions per batch.
A minimum batch age of 30ms is enforced to ensure sufficient dissemination time across replicas, mitigating the effects of triangle inequality in real-world networks.
For \Raptr, the number of sub-blocks in a block is set to 8.

\subsection{Performance under fault-free conditions}


First, we measure the performance of all protocols under non-faulty conditions.
Figure~\ref{fig:failure-free-performance} plots end-to-end latency versus throughput.
\Raptr achieves a peak throughput of 260k TPS while maintaining sub-second latency.
At this load, each VM utilizes 1.44 Gbps of bandwidth. 
\AptosPlus scales similarly to \Raptr but begins to saturate around 250K TPS.
Quorum store-based protocols send batches more frequently and independently of round progression, thereby utilizing bandwidth more effectively and achieving higher performance.

In contrast, Shoal++ and Mysticeti exhibit sharp latency spikes beyond 130k TPS and 110k TPS, respectively, indicating early saturation.
Notably, both DAG-based protocols fail to maintain sub-second latency beyond 100k TPS.

\begin{figure}[t]
  \centering
  \begin{subfigure}[b]{\linewidth}
    \centering
    \includegraphics[width=0.85\linewidth]{plots/raptr-breakdown.pdf}
    \vspace{-6pt}
    \caption{\Raptr}
    \label{fig:latency-breakdown:raptr}
  \end{subfigure}
  \begin{subfigure}[b]{\linewidth}
    \centering
    \includegraphics[width=0.85\linewidth]{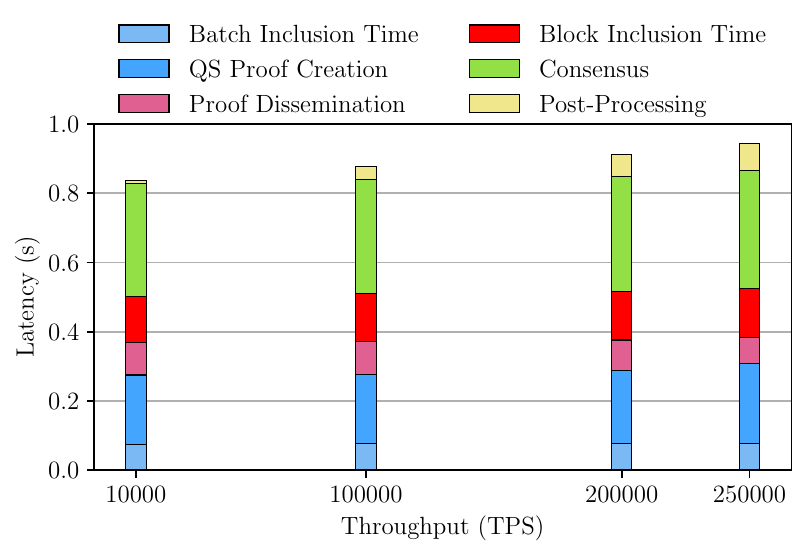}
    \vspace{-6pt}
    \caption{\AptosPlus}
    \label{fig:latency-breakdown-aptos}
  \end{subfigure}
  \vspace{-16pt}
  \caption{Latency breakdown for \Raptr and \AptosPlus under fault-free conditions.}
  \label{fig:latency-breakdown}
  \vspace{-24pt}
\end{figure}

\BabyRaptr shows similar behavior to \Raptr but saturates around 177k TPS.
At higher throughputs, delayed batch dissemination and processing postpone block voting.
Since \BabyRaptr couples the availability quorum with the BFT quorum (as discussed in~\S\ref{sec:raptr:decouple}), 
replicas that do not receive batches begin timing out, resulting in slower progress and early saturation.




\subsection{Performance Breakdown}

Figure~\ref{fig:latency-breakdown} shows a detailed breakdown of end-to-end latency for \Raptr and \AptosPlus, explaining how \Raptr achieves improved performance.
This is the breakdown of the results in Figure~\ref{fig:failure-free-performance}.
At low to moderate loads (up to 100K TPS), \Raptr achieves 25\% lower latency than \AptosPlus.
At 250K TPS---just before the saturation point---\Raptr maintains 755ms latency, a 20\% improvement over \AptosPlus.

The breakdown shows that \Raptr eliminates the Quorum Store proof creation step of 2 message delays, that is present in \AptosPlus, from the critical path, yielding significant latency improvements.
Consensus latency remains mostly stable across both protocols, though slightly higher for \Raptr at high TPS due to delayed batch arrivals, despite a fixed minimum batch age configuration.

Post-processing latency reflects the time required to materialize batches into transactions, which may involve waiting for payloads referenced by quorum store proofs.
This component is dominated by expensive operations such as cloning large vectors of transactions and sequentially notifying the mempool.
As throughput increases, this cost gradually rises with larger batches of transactions.

Finally, we observe that at higher TPS, the ratio of proofs in a block increases, with proofs occupying up to 22\% of the block space.
This occurs because larger batches arrive slightly late and miss the proposal, while the corresponding proofs become available before the next round.

\subsection{Faulty scenarios}

\begin{figure}[!hb]
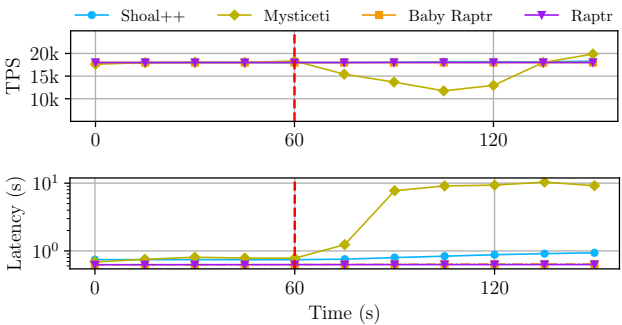

  \centering
  \includegraphics[width=\linewidth]{plots/drops1-timeline-tps.pdf}
  \includegraphics[width=\linewidth]{plots/drops1-timeline-lat.pdf}
  \caption{Impact of a partial network glitch on performance. Note the latency y-axis is in log-scale.}
  \label{fig:failures-drop1pct-5pctnodes}
\end{figure}

As mentioned in~\Cref{sec:system:reputation},\atremove{ \Raptr's}  leader reputation system can rotate out failed replicas, making experiments with permanently failed leaders less informative for demonstrating robustness. Therefore, we design our failure experiments based on patterns observed in real-world failure scenarios.

\paragraph{Partial network glitch}
We simulate a transient network fault by dropping 1\% of egress messages at 5\% of the replicas, resulting in a total drop rate of 0.05\% across the network.
Such scenarios can occur in practice when datacenters in specific regions experience intermittent network issues, leading to flaky TCP connections.
Such partial failures typically go undetected by reputation systems, which are designed to handle more persistent faults.

Figure~\ref{fig:failures-drop1pct-5pctnodes} shows the performance of all protocols over time under a low-to-moderate load of 18K TPS, with fault injection beginning at 60 seconds (marked by the red line).
The uncertified DAG protocol, Mysticeti, experiences significant degradation in both throughput and latency, as it performs critical-path synchronization of missing DAG nodes.
In contrast, the certified DAG protocol Shoal++ performs synchronization off the critical path and thus maintains better throughput. However, its latency increases by approximately 35\% post-fault injection, primarily due to round progress timeouts being triggered.
\Raptr and \BabyRaptr remain unaffected, maintaining the same performance as before, as a sufficient number of replicas receive batches in time to make optimistic progress without stalling.

\begin{figure}[t]
  \centering
  \includegraphics[width=\linewidth]{plots/dropsall-timeline-tps.pdf}
  \includegraphics[width=\linewidth]{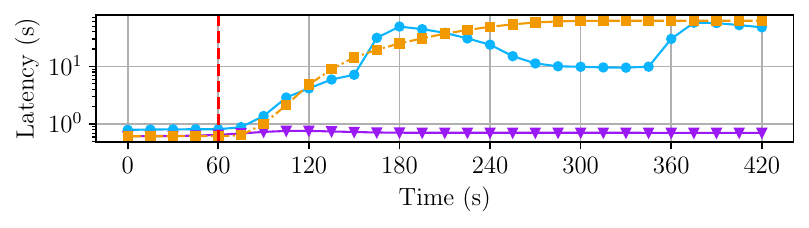}
  \includegraphics[width=\linewidth]{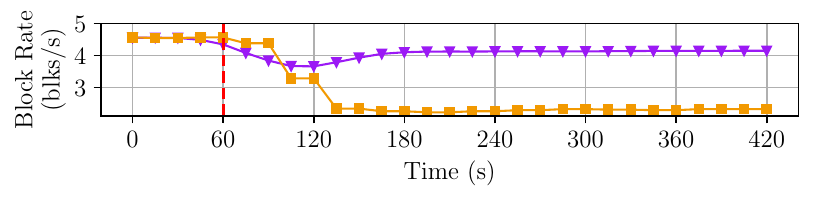}
  \caption{Impact of a full network glitch on performance. Note the latency y-axis is in log-scale.}
  \label{fig:failures-drop1pct-allnodes}
\end{figure}

\paragraph{Full network glitch}
We also evaluate performance under a network-wide transient fault, where all replicas drop 1\% of egress messages.
Figure~\ref{fig:failures-drop1pct-allnodes} shows the performance of \Raptr, \BabyRaptr, and Shoal++ over time under a low-to-moderate load of 18K TPS, with the fault injected at 60 seconds (marked by the red line).

Shoal++ and \BabyRaptr are noticeably affected. Shoal++ throughput exhibits a sinusoidal pattern as it intermittently fails to order anchors in some rounds, then compensates by committing a backlog of DAG nodes once an anchor is ordered.
This batching behavior causes latency to grow and fluctuate in sync with throughput.
\BabyRaptr's block rate drops to 2.5 blocks/second as batches are fetched in-line before voting, creating a backlog of pending transactions that reduces throughput and increases latency.

\Raptr is only mildly impacted. As forming the quorum certificates requires timer-induced QC-votes, the block rate drops slightly, increasing end-to-end latency by approximately 15\%, from 615ms to 710ms.
Despite this, \Raptr maintains flat TPS with no failed rounds. 
\Raptr's separation of the availability quorum from the BFT safety quorum differentiates it from \BabyRaptr (as discussed in \Cref{sec:raptr:decouple}) and allows it to commit full blocks in every round.

\subsection{Summary}
In summary, our evaluation demonstrates that:
\begin{itemize}[leftmargin=*]
\item \Raptr performs exceptionally well under ideal conditions. \Raptr combines the high throughput of pessimistic data dissemination (\AptosPlus) with the low latency of optimistic dissemination (\BabyRaptr).
\item \Raptr eliminates quorum store proof creation latency, achieving up to 25\% lower end-to-end latency compared to \AptosPlus.
\item \Raptr is robust to network-wide transient faults, maintaining stable throughput and bounded latency where other protocols degrade significantly.
\end{itemize}




\section{Related Work} 
\label{sec:related}

The BFT field was pioneered by the landmark PBFT protocol~\cite{pbft}, which provided the first practical solution to the Byzantine consensus problem. PBFT achieved the optimal latency of three message delays~\cite{kuznetsov2021revisiting,abraham2021good}, but its throughput was limited by the bandwidth constraints of a single replica. Since then, a substantial body of research has explored the trade-off between high throughput and low latency in BFT consensus protocols. Broadly, existing approaches can be categorized into leader-based and DAG-based protocols.

\paragraph{Leader-based BFT}
In an effort to increase throughput, numerous protocols~\cite{tendermint,buterin2017casper,hotstuff,jolteon,jalalzai2023fast,kang2024hotstuff,sui2022marlin,abspoel2020malicious,doidge2024moonshot} have integrated efficient leader rotation and pipelining, sometimes at the cost of higher consensus latency, such as HotStuff~\cite{hotstuff} and Jolteon~\cite{jolteon}.
Later, Moonshot achieves three-round consensus latency while retaining the desirable properties of its predecessors. Concurrently, Jolteon*~\cite{jolteon-star} was proposed to reduce Jolteon’s latency to three rounds and has been deployed in production in the Aptos blockchain.

A major breakthrough in throughput was achieved a few years ago with the realization that data dissemination must be parallelized~\cite{narwhaltusk, MirBFT, dispersedledger, arun2022dqbft}.
This insight spurred a line of research~\cite{neiheiser2021kauri,narwhaltusk,quorumstore,giridharan2024autobahn,MirBFT} focused on decoupling data dissemination from metadata ordering to improve throughput in leader-based consensus protocols.
\dadd{Kauri~\cite{neiheiser2021kauri} proposes tree-based communication abstraction for BFT consensus protocols to 
reduce leader's bandwidth bottleneck. 
However, the tree topology requires more network hops to disseminate data increasing the overall latency.
}
Narwhal~\cite{narwhaltusk} has a DAG-based mempool that can be integrated with any leader-based protocol to boost throughput by decoupling transaction dissemination from metadata ordering. 
Aptos productionizes Narwhal via Quorum Store~\cite{quorumstore}, which employs reliable broadcast~\cite{bracha1987asynchronous} to certify transaction batches rather than constructing a DAG. 
Autobahn~\cite{giridharan2024autobahn} is the state-of-the-art high-performance leader-based BFT protocol that is built on top of the Quorum Store data dissemination ideas - it links certified batches into chains, enabling faster failure recovery. In geo-distributed experiments with 4 \atreplace{nodes}{replicas}, it achieves nearly 250k transactions per second.
In comparison, \Raptr delivers similarly high throughput and maintains subsecond latency across 100 geo-distributed \atreplace{nodes}{replicas}.

\paragraph{DAG-based BFT}

Due to their structure, DAG-based protocols~\cite{aleph,allyouneed,hashgraph} naturally excel at efficient data dissemination.
Consequently, recent works~\cite{narwhaltusk,bullshark,bullsharksync,mysticeti,shoal++,Sailfish,bbca,xie2025fides,keidar2022cordial} have primarily focused on reducing latency while sustaining high throughput.
The state-of-the-art DAG-based BFT protocols, such as Cordial Miners~\cite{keidar2022cordial}\textbackslash Mysticeti~\cite{mysticeti}, Sailfish~\cite{Sailfish}, and Shoal++~\cite{shoal++}, have achieved low latency comparable to the state-of-the-art high throughput leader-based BFT protocols.
However, they lack robustness---as demonstrated in Section~\ref{sec:eval}, Mysticeti~\cite{mysticeti} exhibits latency spikes and throughput degradation even under minor network failures (1\% message drops at 5\% of \atreplace{nodes}{replicas}). Shoal++ can tolerate these smaller failures but suffers performance degradation when failures become more severe (1\% message drops at all \atreplace{nodes}{replicas}).

In fact, Mysticeti faces the same issue as {\BabyRaptr}. In {\BabyRaptr}, leaders optimistically propose the metadata of uncertified batches, and replicas cannot vote until they fetch all batches included in the proposal. Similarly, in Mysticeti, replicas cannot add a node to their local DAGs until they have fetched all the $n-f$ uncertified nodes from the previous round referenced by the node. {\Raptr} overcomes this issue in the leader-based approach by enabling \atreplace{voting on sub-blocks}{prefix voting} and partial block commits.

In terms of theoretical good-case latency, Mysticeti, Shoal++, and Raptr all achieve a four-message delay to commit. However, Raptr is faster in practice because DAG-based protocols experience longer per-message delays. This is due to the additional communication required for DAGs to advance rounds--intuitively, each of the  $n-f$ "leaders" must wait for $n-f$ messages from the previous round before proposing a node in the next round. Additionally, timeouts are sometimes enforced to ensure better DAG connectivity, further increasing message delay time.   

\paragraph{Flexible BFT quorums}
Many BFT consensus protocols explore adjusting quorum thresholds to offer flexible safety-liveness tradeoffs under varying network conditions~\cite{momose2021multi,malkhi2019flexible,xiang2021strengthened,neu2021ebb,lokhava2019fast,arun2020duobft}. In contrast, as discussed in~\Cref{sec:raptr:decouple}, \Raptr decouples the availability quorum from the safety quorum solely for performance, without affecting protocol correctness or aiming to provide flexible guarantees.

\section{Conclusion} 
\label{sec:conclusion}

{\Raptr} bridges the gap between fast optimistic data dissemination and robust pessimistic data dissemination by enhancing optimistic data dissemination with \atreplace{sub-block}{prefix} voting and partial block commits.
In a geo-distributed setting with 100 \atreplace{nodes}{replicas},
{\Raptr} sustains 250,000 TPS at just 755ms latency in a globally distributed deployment of 100 replicas under favorable conditions.
It also remains robust during network-wide glitches, exhibiting only minimal performance degradation.



\bibliographystyle{plain}
\bibliography{references}

\begin{thebibliography}{10}

\bibitem{abraham2021good}
Ittai Abraham, Kartik Nayak, Ling Ren, and Zhuolun Xiang.
\newblock Good-case latency of byzantine broadcast: A complete categorization.
\newblock In {\em Proceedings of the 2021 ACM Symposium on Principles of Distributed Computing}, pages 331--341, 2021.

\bibitem{abspoel2020malicious}
Mark Abspoel, Thomas Attema, and Matthieu Rambaud.
\newblock Malicious security comes for free in consensus with leaders.
\newblock {\em Cryptology ePrint Archive}, 2020.

\bibitem{aptos-codebase}
Aptos.
\newblock Official implementation in rust., 2024.

\bibitem{shoal++}
Balaji Arun, Zekun Li, Florian Suri-Payer, Sourav Das, and Alexander Spiegelman.
\newblock Shoal++: High throughput dag bft can be fast!
\newblock {\em arXiv preprint arXiv:2405.20488}, 2024.

\bibitem{arun2020duobft}
Balaji Arun and Binoy Ravindran.
\newblock Duobft: Resilience vs. efficiency trade-off in byzantine fault tolerance.
\newblock {\em CoRR}, abs/2010.01387, 2020.

\bibitem{arun2022dqbft}
Balaji Arun and Binoy Ravindran.
\newblock Scalable byzantine fault tolerance via partial decentralization.
\newblock {\em Proc. VLDB Endow.}, 15(9):1739–1752, may 2022.

\bibitem{mysticeti}
Kushal Babel, Andrey Chursin, George Danezis, Lefteris Kokoris-Kogias, and Alberto Sonnino.
\newblock Mysticeti: Low-latency dag consensus with fast commit path.
\newblock {\em arXiv preprint arXiv:2310.14821}, 2023.

\bibitem{hashgraph}
Leemon Baird.
\newblock The swirlds hashgraph consensus algorithm: Fair, fast, byzantine fault tolerance.
\newblock {\em Swirlds Tech Reports SWIRLDS-TR-2016-01, Tech. Rep}, 2016.

\bibitem{boneh2003aggregate}
Dan Boneh, Craig Gentry, Ben Lynn, and Hovav Shacham.
\newblock Aggregate and verifiably encrypted signatures from bilinear maps.
\newblock In {\em Advances in Cryptology—EUROCRYPT 2003: International Conference on the Theory and Applications of Cryptographic Techniques, Warsaw, Poland, May 4--8, 2003 Proceedings 22}, pages 416--432. Springer, 2003.

\bibitem{bracha1987asynchronous}
Gabriel Bracha.
\newblock Asynchronous byzantine agreement protocols.
\newblock {\em Information and Computation}, 75(2):130--143, 1987.

\bibitem{tendermint}
Ethan Buchman.
\newblock {Tendermint: Byzantine fault tolerance in the age of blockchains}, 2016.

\bibitem{buterin2017casper}
Vitalik Buterin and Virgil Griffith.
\newblock Casper the friendly finality gadget.
\newblock {\em arXiv preprint arXiv:1710.09437}, 2017.

\bibitem{pbft}
Miguel Castro and Barbara Liskov.
\newblock Practical byzantine fault tolerance and proactive recovery.
\newblock {\em ACM Transactions on Computer Systems (TOCS)}, 20(4):398--461, 2002.

\bibitem{cohen2022aware}
Shir Cohen, Rati Gelashvili, Lefteris~Kokoris Kogias, Zekun Li, Dahlia Malkhi, Alberto Sonnino, and Alexander Spiegelman.
\newblock Be aware of your leaders.
\newblock In {\em Financial Cryptography and Data Security: 26th International Conference, FC 2022, Grenada, May 2--6, 2022, Revised Selected Papers}, pages 279--295. Springer, 2022.

\bibitem{narwhaltusk}
George Danezis, Lefteris Kokoris-Kogias, Alberto Sonnino, and Alexander Spiegelman.
\newblock Narwhal and tusk: a dag-based mempool and efficient bft consensus.
\newblock In {\em Proceedings of the Seventeenth European Conference on Computer Systems}, pages 34--50, 2022.

\bibitem{doidge2024moonshot}
Isaac Doidge, Raghavendra Ramesh, Nibesh Shrestha, and Joshua Tobkin.
\newblock Moonshot: Optimizing chain-based rotating leader bft via optimistic proposals.
\newblock In {\em International Conference on Dependable Systems and Networks}, 2024.

\bibitem{aleph}
Adam G{\k{a}}gol, Damian Le{\'s}niak, Damian Straszak, and Micha{\l} {\'S}wi{\k{e}}tek.
\newblock Aleph: Efficient atomic broadcast in asynchronous networks with byzantine nodes.
\newblock In {\em Proceedings of the 1st ACM Conference on Advances in Financial Technologies}, pages 214--228, 2019.

\bibitem{jolteon}
Rati Gelashvili, Lefteris Kokoris-Kogias, Alberto Sonnino, Alexander Spiegelman, and Zhuolun Xiang.
\newblock Jolteon and ditto: Network-adaptive efficient consensus with asynchronous fallback.
\newblock In {\em Financial Cryptography and Data Security: 26th International Conference, FC 2022, Grenada, May 2--6, 2022, Revised Selected Papers}, pages 296--315. Springer, 2022.

\bibitem{giridharan2021no}
Neil Giridharan, Heidi Howard, Ittai Abraham, Natacha Crooks, and Alin Tomescu.
\newblock No-commit proofs: Defeating livelock in bft.
\newblock {\em Cryptology ePrint Archive}, 2021.

\bibitem{giridharan2024autobahn}
Neil Giridharan, Florian Suri-Payer, Ittai Abraham, Lorenzo Alvisi, and Natacha Crooks.
\newblock Autobahn: Seamless high speed bft.
\newblock In {\em Proceedings of the ACM SIGOPS 30th Symposium on Operating Systems Principles}, pages 1--23, 2024.

\bibitem{jalalzai2023fast}
Mohammad~M Jalalzai, Jianyu Niu, Chen Feng, and Fangyu Gai.
\newblock Fast-hotstuff: A fast and robust bft protocol for blockchains.
\newblock {\em IEEE Transactions on Dependable and Secure Computing}, 21(4):2478--2493, 2023.

\bibitem{kang2024hotstuff}
Dakai Kang, Suyash Gupta, Dahlia Malkhi, and Mohammad Sadoghi.
\newblock Hotstuff-1: Linear consensus with one-phase speculation.
\newblock {\em International Conference on Management of Data (SIGMOD)}, 2025.

\bibitem{allyouneed}
Idit Keidar, Eleftherios Kokoris{-}Kogias, Oded Naor, and Alexander Spiegelman.
\newblock All you need is {DAG}.
\newblock In {\em Proceedings of the 40th Symposium on Principles of Distributed Computing}, PODC '21, 2021.

\bibitem{keidar2022cordial}
Idit Keidar, Oded Naor, and Ehud Shapiro.
\newblock Cordial miners: A family of simple, efficient and self-contained consensus protocols for every eventuality.
\newblock {\em arXiv preprint arXiv:2205.09174}, 2022.

\bibitem{kuznetsov2021revisiting}
Petr Kuznetsov, Andrei Tonkikh, and Yan~X Zhang.
\newblock Revisiting optimal resilience of fast byzantine consensus.
\newblock In {\em Proceedings of the 2021 ACM Symposium on Principles of Distributed Computing}, pages 343--353, 2021.

\bibitem{lokhava2019fast}
Marta Lokhava, Giuliano Losa, David Mazi{\`e}res, Graydon Hoare, Nicolas Barry, Eli Gafni, Jonathan Jove, Rafa{\l} Malinowsky, and Jed McCaleb.
\newblock Fast and secure global payments with stellar.
\newblock In {\em Proceedings of the 27th ACM Symposium on Operating Systems Principles}, pages 80--96, 2019.

\bibitem{lumezanu2009triangle}
Cristian Lumezanu, Randy Baden, Neil Spring, and Bobby Bhattacharjee.
\newblock Triangle inequality variations in the internet.
\newblock In {\em Proceedings of the 9th ACM SIGCOMM conference on Internet measurement}, pages 177--183, 2009.

\bibitem{malkhi2019flexible}
Dahlia Malkhi, Kartik Nayak, and Ling Ren.
\newblock Flexible byzantine fault tolerance.
\newblock In {\em Proceedings of the 2019 ACM SIGSAC conference on computer and communications security}, pages 1041--1053, 2019.

\bibitem{bbca}
Dahlia Malkhi, Chrysoula Stathakopoulou, and Maofan Yin.
\newblock Bbca-chain: One-message, low latency bft consensus on a dag.
\newblock {\em arXiv preprint arXiv:2310.06335}, 2023.

\bibitem{momose2021multi}
Atsuki Momose and Ling Ren.
\newblock Multi-threshold byzantine fault tolerance.
\newblock In {\em Proceedings of the 2021 ACM SIGSAC Conference on Computer and Communications Security}, pages 1686--1699, 2021.

\bibitem{neiheiser2021kauri}
Ray Neiheiser, Miguel Matos, and Lu{\'\i}s Rodrigues.
\newblock Kauri: Scalable bft consensus with pipelined tree-based dissemination and aggregation.
\newblock In {\em Proceedings of the ACM SIGOPS 28th Symposium on Operating Systems Principles}, pages 35--48, 2021.

\bibitem{neu2021ebb}
Joachim Neu, Ertem~Nusret Tas, and David Tse.
\newblock Ebb-and-flow protocols: A resolution of the availability-finality dilemma.
\newblock In {\em 2021 IEEE Symposium on Security and Privacy (SP)}, pages 446--465. IEEE, 2021.

\bibitem{Sailfish}
Nibesh Shrestha, Aniket Kate, and Kartik Nayak.
\newblock Sailfish: Towards improving latency of dag-based bft.
\newblock Cryptology ePrint Archive, Paper 2024/472, 2024.
\newblock \url{https://eprint.iacr.org/2024/472}.

\bibitem{spiegelman2024shoal}
Alexander Spiegelman, Balaji Arun, Rati Gelashvili, and Zekun Li.
\newblock Shoal: Improving dag-bft latency and robustness.
\newblock In {\em International Conference on Financial Cryptography and Data Security}, pages 92--109. Springer, 2024.

\bibitem{quorumstore}
Alexander Spiegelman and Brian Cho.
\newblock Quorum store: How consensus horizontally scales on the aptos blockchain.
\newblock \url{https://medium.com/aptoslabs/quorum-store-how-consensus-horizontally-scales-on-the-aptos-blockchain-988866f6d5b0}, 2023.
\newblock Accessed: January 2025.

\bibitem{bullshark}
Alexander Spiegelman, Neil Giridharan, Alberto Sonnino, and Lefteris Kokoris-Kogias.
\newblock Bullshark: Dag bft protocols made practical.
\newblock In {\em Proceedings of the 2022 ACM SIGSAC Conference on Computer and Communications Security}, pages 2705--2718, 2022.

\bibitem{bullsharksync}
Alexander Spiegelman, Neil Giridharan, Alberto Sonnino, and Lefteris Kokoris-Kogias.
\newblock Bullshark: The partially synchronous version.
\newblock {\em arXiv preprint arXiv:2209.05633}, 2022.

\bibitem{MirBFT}
Chrysoula Stathakopoulou, Tudor David, and Marko Vukolic.
\newblock Mir-bft: High-throughput {BFT} for blockchains.
\newblock {\em CoRR}, abs/1906.05552, 2019.

\bibitem{sui2022marlin}
Xiao Sui, Sisi Duan, and Haibin Zhang.
\newblock Marlin: Two-phase bft with linearity.
\newblock In {\em 2022 52nd Annual IEEE/IFIP International Conference on Dependable Systems and Networks (DSN)}, pages 54--66. IEEE, 2022.

\bibitem{sui}
The~Sui Team.
\newblock Sui networks, 2023.
\newblock Accessed: May 2023.

\bibitem{jolteon-star}
Satya Vusirikala, Daniel Xiang, and Zekun Li.
\newblock {AIP-89 Consensus Latency Reduction using Order Votes}.
\newblock \url{https://github.com/aptos-foundation/AIPs/blob/main/aips/aip-89.md}, 2025.
\newblock Accessed: January 2025.

\bibitem{xiang2021strengthened}
Zhuolun Xiang, Dahlia Malkhi, Kartik Nayak, and Ling Ren.
\newblock Strengthened fault tolerance in byzantine fault tolerant replication.
\newblock In {\em 2021 IEEE 41st International Conference on Distributed Computing Systems (ICDCS)}, pages 205--215. IEEE, 2021.

\bibitem{xie2025fides}
Shaokang Xie, Dakai Kang, Hanzheng Lyu, Jianyu Niu, and Mohammad Sadoghi.
\newblock Fides: Scalable censorship-resistant dag consensus via trusted components.
\newblock {\em arXiv preprint arXiv:2501.01062}, 2025.

\bibitem{dispersedledger}
Lei Yang, Seo~Jin Park, Mohammad Alizadeh, Sreeram Kannan, and David Tse.
\newblock $\{$DispersedLedger$\}$:$\{$High-Throughput$\}$ byzantine consensus on variable bandwidth networks.
\newblock In {\em 19th USENIX Symposium on Networked Systems Design and Implementation (NSDI 22)}, 2022.

\bibitem{hotstuff}
Maofan Yin, Dahlia Malkhi, Michael~K. Reiter, Guy Golan{-}Gueta, and Ittai Abraham.
\newblock Hotstuff: {BFT} consensus with linearity and responsiveness.
\newblock In Peter Robinson and Faith Ellen, editors, {\em Proceedings of the 2019 {ACM} Symposium on Principles of Distributed Computing, {PODC} 2019, Toronto, ON, Canada, July 29 - August 2, 2019}, pages 347--356. {ACM}, 2019.

\end{thebibliography}
\balance

\appendix

\clearpage
\nobalance

\section{Formal notation}\label{sec:notation}


\atadd{Let us introduce some formal notation that we will use in the pseudocode in \Cref{sec:pseudocode} as well as in the proof in \Cref{sec:proof}.}

\sloppypar
A \emph{\atreplace{sub-block}{block prefix}} is a tuple $(B, \prefix)$, where\atadd{ $B$ is a block and} $0 \le \prefix \le \nSubBlocks$.
A \emph{\atremove{sub-block }rank} is a tuple $(\round, \prefix)$. 
\atreplace{Sub-block ranks}{Ranks} are compared lexicographically, i.e., $(r_1, p_1) \le (r_2, p_2)$ iff $r_1 < r_2$ or $r_1 = r_2$ and $p_1 \le p_2$.
We use $\rank(\qc)$ to denote the\atremove{ sub-block} rank of a $\qc$, i.e., $\rank(\qc) = (\qc.\round, \qc.\prefix)$.
We say that $\qc_1$ is \emph{higher} (resp., \emph{lower}) than $\qc_2$ iff $\rank(\qc_1) > \rank(\qc_2)$ (resp., $\rank(\qc_1) < \rank(\qc_2)$).

We say that $\qc$ \emph{references the block} $B$ iff $\qc.\hash = H(B)$ and \emph{references the \atreplace{sub-block}{block prefix}} $(B, \prefix)$ iff $\qc.\hash = H(B)$ and $\qc.\prefix = \prefix$.
%
From the collision-resistance property of the hash function, we can reasonably assume that every QC references at most one block.\footnote{Technically speaking, hash collisions do exist, but we can ignore their existence as, with but negligible probability, no \atreplace{node}{replica}, whether honest or malicious, will\atremove{ ever} be able to find them within a reasonable time frame. Here and throughout the rest of the paper, we ignore such technicalities to maintain the readability.}
Moreover, since a valid QC contains an aggregate signature from $\quorumSize > f$ signatures (\cref{line:verify-qc:start}) and honest \atreplace{nodes}{replicas} only sign a QC if they \atreplace{have the}{seen the corresponding} block,
from the unforgeability of the aggregate signatures, we can also reasonably assume that a QC references at least one valid block.\footnote{Technically, it holds with but negligible probability for computationally-bounded adversaries.}
We use the notation $\block(\qc)$ and $\blockPrefix(\qc)$ to denote the block and the \atreplace{sub-block}{block prefix} referenced by $\qc$, respectively.

\sloppypar
$\chain(\qc)$ denotes the sequence of \atreplace{sub-blocks}{block prefixes} $(B_0, \prefix_0), \dots, (B_l, \prefix_l)$ that starts at $(B_0, \prefix_0) = (\genesisBlock, 0)$, ends at $(B_l, \prefix_l) = \blockPrefix(\qc)$, and
$\forall i < l: (B_i, \prefix_i) = \blockPrefix(B_{i+1}.\qcParent)$.
We use the notation $\messages(\chain(\qc))$ to refer to the sequence of messages contained in $\chain(\qc)$.

We say that $\qc_1$ is a \emph{prefix} of $\qc_2$ and write $\qc_1 \preceq \qc_2$ iff for some $(B_i, \prefix_i) \in \chain(\qc_2)$, $\block(\qc_1) = B_i$ and $\qc_1.\prefix \le \prefix_i$.
%
%
%
It is easy to see that if $\qc_1 \preceq \qc_2$, then $\messages(\chain(\qc_1))$ is a prefix of $\messages(\chain(\qc_2))$ (in the regular sense for sequences) and that the $\preceq$ relation is commutative.


We say that $\qc$ is \emph{directly committed} if any honest replica invokes $\CommitQC(\qc)$ (\cref{line:commit-qc:start}) at any point in time during the execution.

\section{Pseudocode}\label{sec:pseudocode}

\begin{atreview}
The detailed pseudocode of {\Raptr} can be found in \Crefrange{alg:raptr-first}{alg:raptr-last}. 
\begin{itemize}
    \item We start by defining local variables and timers as well as reciting the notation and the interface for the digital signatures in \Cref{alg:raptr:definitions}.

    \item We then describe the data structures and verification functions for the cryptographic certificates in \Cref{alg:raptr:structures-and-verification}. 

    \item \Cref{alg:raptr:quorum-store} describes the quorum store implementation (see \Cref{sec:qs}).

    \item \Cref{alg:raptr:aux} introduces three key auxiliary functions for {\Raptr}: $\TryAdvanceRound$, which is invoked whenever a new valid Entry Reason is observed, $\OnNewQC$, which is invoked whenever a new QC is formed or received, and $\CommitQC$, which is invoked whenever we commit something.
    It also defines the start of the protocol (\cref{line:start}) and processing of $\mAdvanceRound$ messages (\cref{line:advance-round:start})

    \item \Cref{alg:raptr:qc-votes} describes the processing of block proposals and the full process of QC-voting and forming new QCs.

    \item Finally, \Cref{alg:raptr:cc-and-tc} describes the processes CC-voting and TC-voting as well as forming the CCs and TCs.
\end{itemize}
\end{atreview}

\subsection{Implicit verification}

\atadd{To avoid excessive boilerplate, all the verification functions in \Cref{alg:raptr:structures-and-verification} are assumed to be invoked implicitly each time a corresponding certificate is received over the network.
Additionally, for the same reason, all the partial signatures are assumed to be verified with $\pver$ upon receiving (see \Cref{sec:prelim:crypto}).
We sometimes omit other instances of boilerplate verification of message fields to keep the pseudocode focused on the core logic of the protocol.
%
If any of the verifications fail, the message is ignored and the sender can be presumed Byzantine.}


\newcounter{raptrparts}
\setcounter{raptrparts}{0}

\begin{algorithm*}[p]
    \begin{smartalgorithmic}[1]
        

        \Multiline \textbf{variables:}
            \LineComment{\Raptr:}
            \State $\rCur = 0$ \tab\tab\tab\tab\tab\tab\tab\tab\Comment{The current round}
    
            \State $\rTimeout = 0$ \tab\tab\tab\tab\tab\tab\tab\tab\Comment{The last round this \atreplace{node}{replica} voted to time out}
    
            \State $\entryReason = \bot$ \tab\tab\tab\tab\tab\tab\Comment{The reason for entering $\rCur$}
    
            \State $\lastQCVote = (0, 0)$ \tab\tab\tab\tab\tab\tab\Comment{The last issued QC-Vote, as (round, prefix)}
            
            \State $\ccVoted = \text{map }[\round] \to \{\true, \false\}$ \Comment{Whether the replica already CC-voted in a round}
    
            \State $\qcHigh = \qcGenesis$ \tab\tab\tab\tab\tab\tab\Comment{The highest QC known to the \atreplace{node}{replica}}
        
            \State $\qcCommitted = \qcGenesis$ \tab\tab\tab\tab\tab\Comment{The highest QC committed by the \atreplace{node}{replica}}
    
            \State $\proposal = \text{map }[\round] \to \VariableStyle{block}$ \tab\tab\Comment{Map from a round number to the proposal received from the leader of the round}
                        
            \State $\qcVotes = \text{map }[\round][\hash][\replicaId] \to (\prefix, \signature)$ 

            \State $\ccVotes = \text{map }[\round][\replicaId] \to (\qc, \signature)$ 
            
            \State $\tcVotes = \text{map }[\round][\replicaId] \to (\qc, \signature)$

            \LineComment{Quorum store:}
            \State $\batches = \{\}$ \Comment{Set of tuples $(m, \sn, q)$ of receives messages}
            \State $\poas = \{\}$ \tab\Comment{Set of proofs of availability}
            \State $\myBatches = \text{map }[\sn] \to \batch$
            \State $\poaVotes = \text{map }[\sn] \to \signature$
        \EndMultiline

        \algspace
        \Multiline \textbf{timers:}
            \State $\tQCVote$ timer \tab\tab\Comment{Used to delay the QC-vote when not all data is available}
            \State $\tRoundTimeout$ timer \Comment{Used to time out faulty leaders}
        \EndMultiline

        \algspace
        \Multiline \textbf{notation and parameters:}
            \State $p$ -- this replica
            \State $\leader_r$ -- the leader of round $r$
            \State $\nSubBlocks$ -- the number of sub-blocks (groups) of optimistically proposed batch hashes per block (see \Cref{sec:system:signature})
            \State $\storageRequirement$ -- the availability requirement (see \Cref{sec:raptr:intuition})
            \State $\Delta$ -- the upper bound on message delivery time in after GST (see \Cref{sec:prelim:model})
            \State $\epsilon\Delta$ -- the duration of the small delay before QC-voting when not all data is available
            \State $\chain(\qc)$ -- the sequence of tuples $(B_i, \prefix_i)$ representing the chain ending in $\qc$\atadd{ (see \Cref{sec:notation})}
            \State $\messages(\chain(\qc))$ -- the sequence of messages in $\chain(\qc)$\atadd{ (see \Cref{sec:notation})}
            \State \atadd{$\rank(\qc)$ -- tuple $(\qc.\round, \qc.\prefix)$ (see \Cref{sec:notation})}
        \EndMultiline

        \algspace
        \Multiline \textbf{multisignature interface:}
            \State $\sk_p$ -- the secret key of this \atreplace{node}{replica} $p$
            \State $\pk_j$ -- the public key of \atreplace{node}{replica} $j$
            \State $\psign, \pver, \comb, \ver$ -- signing, combining, and verifying digital signatures (see \Cref{sec:prelim:crypto}).
        \EndMultiline



    \end{smartalgorithmic}

    \addtocounter{raptrparts}{1}
    \caption{{\Raptr}: part {\theraptrparts} (definitions)}
    \label{alg:raptr-first}
    \label{alg:raptr:definitions}
\end{algorithm*}

\begin{algorithm*}[p]
    \begin{smartalgorithmic}[1]

        \Multiline \textbf{structures:}
            \State Quorum Certificate (QC): $(\round, \hash, \votePrefixes, \signature)$, where $\votePrefixes$ is a set of tuples $(\replicaId, \prefix)$
                \Statex\tab $\qc.\prefix := \storageRequirement$'th $\max\{\,\prefix \mid ({\replicaId}, \prefix) \in \qc.\votePrefixes\,\}$ \Comment{i.e., disregard $S-1$ largest prefixes and take maximum}

            \State Commit Certificate (CC): $(\round, \hash, \votePrefixes, \signature)$, where $\votePrefixes$ is a set of tuples $(\replicaId, \prefix)$
                \Statex\tab $\cc.\commitPrefix := \min \{\,\prefix \mid ({\replicaId}, \prefix) \in \cc.\votePrefixes \,\}$
                \Statex\tab $\cc.\extendPrefix := \max \{\,\prefix \mid ({\replicaId}, \prefix) \in \cc.\votePrefixes\,\}$

            \State Timeout Certificate (TC): $(\round, \voteData, \signature)$, where $\voteData$ is a set of tuples $(\replicaId, (\round, \prefix))$
                \Statex\tab $\tc.\extendRank := \max \{\,\rank \mid ({\replicaId}, \rank) \in \tc.\voteData)\,\}$ 

            \State Entry Reason: one of $\rFullQC(\atadd{\qc})$, $\rCC(\cc\atadd{, \qc})$, or $\rTC(\tc\atadd{, \qc})$
            \Statex\tab \atadd{$\reason.\round := \textbf{if } \reason = \rTC(\tc, \qc) \textbf{ then } \tc.\round + 1 \textbf{ else } \reason.\qc.\round + 1$}
            \State Block: $(\round\atremove{, \qcParent},  \entryReason, \payload)$
            \Statex\tab \atadd{$\block.\qcParent := \block.\entryReason.\qc$}
            \State \atadd{Proof of Availability (PoA): $(\hash, \sn, \replicaId, \votes, \signature)$, where $\votes$ is a set of replica IDs}
        \EndMultiline





        \algspace
        \LineComment{\atadd{If any of the verification functions fails, the message that contained the invalid certificate is ignored.}}
        \LineComment{\atadd{In addition to these verification functions, all partial signatures in all messages are implicitly assumed to be verified.}}

        \algspace[0.5]
        \LineComment{\atadd{{\Raptr} verification functions}}
        
        \algspace[0.5]
        \Function{$\VerifyQC$}{$\qc$}
        \Comment{implicitly invoked each time a QC is received in a message\atremove{, the message is ignored if the check fails}}
        \label{line:verify-qc:start}
            \BreakableLine[\Return{}] $|\qc.\votePrefixes| \ge \quorumSize \tand$
                \RaggedBreak $\ver(\{\, (\pk_{q}, \mQCVote||\qc.\hash||\qc.\round||\prefix) \mid (q, \prefix) \in \qc.\votePrefixes \,\}, \qc.\signature)$
            \EndBreakableLine
        \EndFunction

        \algspace[0.5]
        \Function{$\VerifyCC$}{$\cc$} \Comment{implicitly invoked each time a CC is received\atadd{ in a message}} \label{line:verify-cc:start}
            \BreakableLine[\Return{}] $|\cc.\votePrefixes| = \quorumSize \tand$
                \RaggedBreak $\ver(\{\,(\pk_q, \mCCVote||\cc.\hash||\cc.\round||\prefix) \mid (q, \prefix) \in \cc.\votePrefixes \,\}, \cc.\signature)$
            \EndBreakableLine
            
        \EndFunction

        \algspace[0.5]
        \Function{$\VerifyTC$}{$\tc$} \Comment{implicitly invoked each time a TC is received\atadd{ in a message}}
            \BreakableLine[\Return{}] $|\tc.\voteData| \ge \quorumSize \tand$
                \RaggedBreak $\ver(\{\, (\pk_q, \mTCVote||\tc.\round||\voteRound||\votePrefix) \mid (q, (\voteRound, \votePrefix)) \in \tc.\voteData \,\}, \tc.\signature)$
            \EndBreakableLine
        \EndFunction
        
        \algspace[0.5]
        \Function{$\VerifyEntryReason$}{$\atremove{r, \qc, }\reason$} 
        \Comment{\atreplace{invoked explicitly on \cref{line:advance-round:if,line:recv-proposal:if}}{implicitly invoked each time an Entry Reason is received in a message}} \label{line:verify-entry-reason:start}
            \If {$\reason = \rFullQC\atadd{(\qc)}$} \label{line:verify-entry-reason:full-qc-if}
                \State \Return{$\qc.\prefix = \nSubBlocks$} \label{line:verify-entry-reason:full-qc-return}
            \ElsIf{$\reason = \rCC(\cc\atadd{, \qc})$} \label{line:verify-entry-reason:cc-if}
                \State \Return{$\cc.\round = \qc.\round\atremove{ = r - 1} \tand \qc.\prefix \ge \cc.\extendPrefix$} \label{line:verify-entry-reason:cc-return}
            \ElsIf{$\reason = \rTC(\tc\atadd{, \qc})$} \label{line:verify-entry-reason:tc-if}
                \State \Return{$\atremove{\tc.\round = r - 1 \tand }\rank(\qc) \ge \tc.\extendRank\atadd{ \tand \qc.\round \le \tc.\round}$} \label{line:verify-entry-reason:tc-return}
            \EndIf
        \EndFunction \label{line:verify-entry-reason:end}

        \algspace[0.5]
        \LineComment{\atadd{Quorum store verification functions:}}

        \algspace[0.5]
        \Function{$\VerifyPoA$}{$\poa$} \Comment{implicitly invoked each time a PoA is received in a message}
            \State \Return{$|\poa.\votes| \ge \quorumSize \tand \ver(\{(\pk_q, \mPoAVote||\poa.\hash||\poa.\sn||\poa.\replicaId) \mid q \in votes \}, \poa.\signature)$}
        \EndFunction
    \end{smartalgorithmic}

    \addtocounter{raptrparts}{1}
    \caption{{\Raptr}: part {\theraptrparts} (structures and verification functions)}
    \label{alg:raptr:structures-and-verification}
\end{algorithm*}

\begin{algorithm*}[p]
    \begin{smartalgorithmic}[1]

        \Function{$\aBcast$}{$m, \sn$} \Comment{Multicast the batch of transactions to all replicas} 
            \State $\myBatches[\sn] = m$
            \State \Multicast{$\mBatch, m, \sn$}
        \EndFunction

        \algspace
        \UponReceiving{$\mBatch, m,\sn$}{replica $q$}
            \State $\batches.\Add((m, \sn, q))$
            \State \Send{$\mPoAVote, \sn, \psign(\sk_p, \mPoAVote||H(m)||\sn||q)$}{$q$}
        \EndUpon

        \algspace
        \UponReceiving{$\mPoAVote, \sn, \sigma$}{replica $q$}
            \State let $m = \myBatches[\sn]$
            \State $\poaVotes[\sn][q] = \sigma$
            \If{$|\poaVotes[\sn]| = \quorumSize$}
                \State let $\votes = \{\{(q, \sigma) \mid q \in \Pi \tand \sigma = \poaVotes[q]\})\}$
                \State \Multicast{$\mPoA, (H(m), \sn, p, \{q \mid (q, \sigma) \in \votes\}, \comb(\votes))$}
            \EndIf
        \EndUpon

        \algspace
        \UponReceiving{$\mPoA, \poa$}{any replica}
            \State $\poas.\Add(\poa)$
            \State let $(h, \sn, q, \votes, \sigma) = \poa$
            \If {$\batches$ does not a batch with hash $h$}
                \State fetch the batch from the replicas in $\votes$
            \EndIf
        \EndUpon

        \algspace
        \Function{$\AvailablePrefix$}{$B$}
            \State \Return the length of the longest available prefix of batches referenced in $B$ (a number between 0 and $\nSubBlocks$)
        \EndFunction

        \algspace
        \Function{$\GetPayload$}{$\qc$}
            \State \Return a block that consists of: 
            \State \tab(1): all PoAs in $\poas$ that are not included in $\chain(\qc)$
            \State \tab(2): all batches in $\batches$ that are not included in $\chain(\qc)$ and for which there are no PoAs in $\poas$,
            \State \tab\tab split into $\nSubBlocks$ groups (called \emph{sub-blocks})
        \EndFunction

        \algspace
        \Function{$\OnNewBlock$}{$B$} \Comment{implicitly invoked whenever a block is received}
            \State add the PoAs in the block to $\poas$ and fetch the data
            \State try fetching the referenced batches from their creators and the leader proposing the block
        \EndFunction

        \algspace
        \Function{$\FetchQCData$}{$\qc$} \Comment{invoked in $\OnNewQC$}
            \State recursively fetch all data in $\chain(\qc)$, if missing locally
        \EndFunction
    \end{smartalgorithmic}

    \addtocounter{raptrparts}{1}
    \caption{\atadd{{\Raptr}: part {\theraptrparts} (quorum store)}}
    \label{alg:raptr:quorum-store}
\end{algorithm*}

\begin{algorithm*}[p]
    \begin{smartalgorithmic}[1]
        \LineComment*{At the beginning of the protocol, all \atreplace{nodes}{replicas} enter round 1.}
        \UponStart \label{line:start}
            \State $\TryAdvanceRound(1, \rFullQC\atadd{(\qcGenesis)})$
        \EndUpon

        \algspace
        \LineComment*{\textbf{\TryAdvanceRound:} Upon entering round $r$, notify the leader so that it can also advance to this round,}
        \LineComment*{stop the $\tQCVote$ timer, and reset the $\tRoundTimeout$ timer to $\roundTimeoutDuration$.}
        \LineComment*{If this \atreplace{node}{replica} is the leader, create a block and multicast it to all participants.}
        \Function{\TryAdvanceRound}{$r, \reason$}
            \If{$\atadd{\reason.\round = r \tand }r > \rCur$}
                \State $\rCur = r$
                \State $\entryReason = \reason$
                \If {$p = \leader_r$}
                    \State let $\payload = \GetPayload(\atadd{\reason.\qc})$
                    \State let $B = (\rCur\atremove{, \qcHigh}, \reason, \payload)$
                    \State \Multicast{$\mPropose, B$}
                \Else
                    \State \Send{$\mAdvanceRound, r\atremove{, \qcHigh}, \reason$}{$\leader_r$} 
                \EndIf

                \State stop the $\tQCVote$ timer if running
                \State reset the $\tRoundTimeout$ timer to $\roundTimeoutDuration$
            \EndIf
        \EndFunction

        \algspace
        \LineComment*{Upon receiving a valid $\mAdvanceRound$ message, execute $\OnNewQC$ and $\TryAdvanceRound$.}
        \UponReceiving{$\mAdvanceRound\atremove{, r, \qc}, \reason$}{any node} \label{line:advance-round:start} 
                \State $\OnNewQC(\atadd{\reason.}\qc)$
                \State $\TryAdvanceRound(\atreplace{r}{\reason.\round}, \reason)$
        \EndUpon

        \algspace
        \LineComment*{\textbf{\OnNewQC:} Each time when receiving or forming a new QC:}
        \LineComment*{\quad 1. Update $\qcHigh$.}
        \LineComment*{\quad 2. Issue a CC-vote if not yet CC-voted or TC-voted in this round.}
        \LineComment*{\quad 3. Upon observing a full-prefix QC for round $r$, advance to round $r+1$.}
        \LineComment*{\textbf{Two-chain commit\atadd{ (optional)}:} Whenever there exists a certified block $B$ with the parent from the adjacent round, i.e.,}
        \LineComment*{$B.\round = B.\qcParent.\round + 1$, commit $B.\qcParent$}

        \Function{\OnNewQC}{$\qc$}
            \If {$\rank(\qc) > \rank(\qcHigh)$}
                $\qcHigh = \qc$ \label{line:on-new-qc:update-qc-high}
            \EndIf
            \If {$\tnot {\ccVoted[\qc.\round]} \tand \qc.\round > \rTimeout$} \label{line:on-new-qc:if-cc-vote}
                \State {$\ccVoted[\qc.\round] = \true$}
                \State \Multicast{$\mCCVote, \qc, \psign(\sk_p, \mCCVote||\qc.\hash||\qc.\round||\qc.\prefix)$} \label{line:on-new-qc:cc-vote}
            \EndIf
            \If{$\qc.\prefix = \nSubBlocks$}
                  $\TryAdvanceRound(\qc.\round + 1, \rFullQC\atadd{(\qc)})$
            \EndIf

            \State $\FetchQCData(\qc)$

            \Upon{all blocks in $\chain(qc)$ are locally available}
                \If{$\exists B \in \chain(\qc) \text{ such that } B.\round = B.\qcParent.\round + 1 
            \tand \rank(B.\qcParent) > \rank(\qcCommitted)$}
                    \State $\CommitQC(B.\qcParent)$ \label{line:on-new-qc:commit-qc} \Comment{two-chain commit rule\atadd{ \textbf{(optional)}}}
                \EndIf
            \EndUpon
        \EndFunction

        \algspace
        \LineComment*{\textbf{Deliver:} upon committing a QC, deliver all new transactions in its chain to the application layer.}
        \Function{$\CommitQC$}{$\qc$} \label{line:commit-qc:start}
            \State $\qcCommitted = \qc$
            \Upon {all data in $\chain(\qc)$ is available locally}
                \For {$(m, \sn, q) \in \messages(\chain(\qc))$}
                    \If {$(m, \sn, q)$ has not yet been delivered} \label{line:commit-qc:dedup-if}
                        \State $\aDeliver(m, \sn, q)$ \label{line:commit-qc:a-deliver}
                    \EndIf
                \EndFor
            \EndUpon
        \EndFunction
    \end{smartalgorithmic}

    \addtocounter{raptrparts}{1}
    \caption{{\Raptr}: part {\theraptrparts} (key auxiliary functions)}
    \label{alg:raptr:aux}
\end{algorithm*}

\begin{algorithm*}[p]
    \begin{smartalgorithmic}[1]
        \LineComment*{Upon receiving a valid block $B = (r\atremove{, \qcParent}, \reason, \payload)$ from $L_r$ for the first time, if $r \ge \rCur$ and $r > \rTimeout$,}
        \LineComment*{store the proposed block, execute $\OnNewQC$ and $\TryAdvanceRound$, and start the $\tQCVote$ timer.}
        \UponReceiving{\mPropose, $B$}{$q$}
            \State let $(r\atremove{, \qcParent}, \reason, \payload) = B$
            \State $\OnNewQC(\atadd{\reason.\qc})$
            \If {$q = \leader_r \tand r \ge \rCur \tand r > \rTimeout \tand \proposal[r] = \bot \tand \atreplace{\VerifyEntryReason(r, \qcParent, \reason)}{\reason.\round = r}$} \label{line:recv-proposal:if}
                \State $\proposal[r] = B$ \label{line:recv-proposal:set-proposal-var}
                \State $\TryAdvanceRound(r, \reason)$
                \State start the $\tQCVote$ timer set to $\qcVoteTimerDuration$ \label{line:recv-proposal:start-timer}
            \EndIf
        \EndUpon

        \algspace
        \LineComment*{A \atreplace{node}{replica} issues a QC-vote in its current round up to 2 times:}
        \LineComment*{\quad 1. Once a fixed timer has elapsed since receiving the block.}
        \LineComment*{\quad 2. When the \atreplace{node}{replica} can issue a full-prefix QC-vote.}
        \LineComment*{The \atreplace{node}{replica} QC-votes only if $\rCur > \rTimeout$.}
        %
        %
        \Upon{$\tQCVote$ timer expires}
            \State $\QCVote()$ \label{line:qc-vote-call-timer}
        \EndUpon

        \algspace[0.5]
        \Upon{$\AvailablePrefix(\proposal[\rCur]) = \nSubBlocks$}
            \State $\QCVote()$ \label{line:qc-vote-call-full}
        \EndUpon

        \algspace[0.5]
        \Function{$\QCVote$}{}
            \State let $\prefix = \AvailablePrefix(\proposal[\rCur])$
            \If {$\rCur > \rTimeout \tand \lastQCVote < (\rCur, \prefix)$} \label{line:qc-vote:if}
                \State $\lastQCVote = (\rCur, \prefix)$
                \State \Multicast{$\mQCVote, \rCur, \prefix, H(\proposal[\rCur]), \psign(\sk_p, \mQCVote||H(\proposal[\rCur])||\rCur||\prefix)$} \label{line:qc-vote:sign-and-send}
            \EndIf
        \EndFunction


        \algspace
        \LineComment*{{A \atreplace{node}{replica} only processes QC-votes in round $\rCur$ and higher.}}
        \LineComment*{It can form a QC up to 2 times{ in a round}, when it has received a quorum of QC-votes with the same block digest and either:}
        \LineComment*{\quad 1. the \atreplace{node}{replica} has not yet formed or received any QC in this round; or}
        \LineComment*{\quad 2. the \atreplace{node}{replica} can form the full-prefix QC, i.e., at least $\storageRequirement$ of the votes have prefix $\nSubBlocks$.}
        \LineComment*{Upon forming a QC, execute $\OnNewQC$.}
        \UponReceiving{$\mQCVote, r, \prefix, h, \sigma$}{\atreplace{node}{replica} $q$} \label{line:recv-qc-vote:start}
            \If {{$r \ge \rCur \tand (\qcVotes[r][h][q] = \bot \tor \text{the prefix in }\qcVotes[r][h][q] < \prefix)$}}
                \State $\qcVotes[r][h][q] = (\prefix, \sigma)$ 
                
                \If {$|\qcVotes[r][h]| \ge \quorumSize \tand (\qcHigh.\round < r \tor |\{\, q \mid \atadd{\text{ the prefix in }}\qcVotes[r][h][q] = \nSubBlocks \,\}| \ge \storageRequirement)$}
                    \State let $\votes = \{\,(q, \prefix, \sigma) \mid q \in \Pi \tand (\prefix, \sigma) = \qcVotes[r][h][q] \,\}$
    
                    \State let $\votePrefixes = \{\,(q, \prefix) \mid (q, \prefix, \sigma) \in \votes\,\}$ 
                    \State let $\qc = (r, h, \votePrefixes, \comb(\{\,(q,\sigma) \mid (q, \prefix, \sigma) \in \votes \,\}$ \label{line:recv-qc-vote:form-qc}
                    \State $\OnNewQC(\qc)$                
                \EndIf
            \EndIf
        \EndUpon
    \end{smartalgorithmic}

    \addtocounter{raptrparts}{1}
    \caption{{\Raptr}: part {\theraptrparts} (QC-votes)}
    \label{alg:raptr:qc-votes}
\end{algorithm*}

\begin{algorithm*}[p]
    \begin{smartalgorithmic}[1]
        \LineComment*{Upon receiving a CC-vote, execute $\OnNewQC$.}
        \LineComment*{Upon receiving a quorum of CC-votes, form a CC, commit the $\quorumSize$'th maximum QC from the votes,} \LineComment*{and advance to the next round.}
        \LineComment*{A \atreplace{node}{replica} can form multiple CCs in the same round, each time increasing the committed prefix.}
        \UponReceiving{$\mCCVote, \qc, \sigma$}{\atreplace{node}{replica} $q$} \label{line:recv-cc-vote}
            \State $\OnNewQC(\qc)$
            \State let $r = \qc.\round$
            \State $\ccVotes[r][q] = (\qc, \sigma)$
            \If {$|\ccVotes[r]| \ge \quorumSize$} \label{line:recv-cc-vote:if}
                \State let $\votes = \{\, \text{set of $\quorumSize$ votes from $\ccVotes[r]$ with the highest QCs in format $(\sender, \qc, \signature)$} \,\}$ \label{line:recv-cc-vote:select-votes}
                \State let $\qcMin =  \min \{\, \qc \mid (q, \qc, \sigma) \in \votes \,\}$ \label{line:recv-cc-vote:qc-min}
                \State \atadd{let $\qcMax = \max \{ \qc \mid (q, \qc, \sigma) \in \votes) \}$}
                \If {$\rank(\qcMin) > \rank(\qcCommitted)$}
                    \State let $\votePrefixes = \{(q, \qc.\prefix) \mid (q, \qc, \sigma) \in \votes \}$ \Comment{necessary for signature verification}
                    \State let $\hash = \qc.\hash$ \Comment{This could be any QC from $\votes$. All same-round QCs have the same block hash.} 
                    \State let $\cc = (r, \hash, \votePrefixes, \comb(\{\, (q, \sigma) \mid (q, \qc, \sigma) \in \votes \,\}))$ \label{line:recv-cc-vote:form-cc}
                    \State $\CommitQC(\qcMin)$ \label{line:recv-cc-vote:commit-qc}
                    \State $\TryAdvanceRound(r+1, \rCC(cc\atadd{, \qcMax}))$
                \EndIf
            \EndIf
        \EndUpon

        \algspace
        \LineComment*{When the timeout expires, multicast a signed {TC-vote} with $\qcHigh$ attached.}
        \Upon{the $\tRoundTimeout$ timer expires} \label{line:round-timeout:start}
            \State $\rTimeout = \rCur$
            \State \Multicast{$\mTCVote, \rCur\atadd{, \entryReason}, \qcHigh, \psign(\sk_p, \mTCVote||\rCur||\qcHigh.\round||\qcHigh.\prefix)$} \label{line:round-timeout:tc-vote}
        \EndUpon

        \algspace
        \LineComment*{A \atreplace{node}{replica} only processes TC-votes in round $\rCur$ and higher.}
        \LineComment*{Upon receiving a valid {TC-vote}, execute $\OnNewQC$.}
        \LineComment*{Upon gathering a quorum of matching TC-votes, form the TC and execute $\TryAdvanceRound$.}
        \UponReceiving{$\mTCVote, r\atadd{, \entryReason}, \qc, \sigma$}{\atreplace{node}{replica} $q$}
            \If {\atadd{$\reason.\round = r$}}
                \State $\OnNewQC(\qc)$
                \State \atadd{$\TryAdvanceRound(r, \entryReason)$} \label{line:recv-tc-vote:catch-up-round}
                \State $\tcVotes[r][q] = (\qc, \sigma)$
                \If {$|\tcVotes[r]| = \quorumSize$}
                    \State let $\votes = \{\,(q, \qc, \sigma) \mid q \in \Pi \tand (\qc, \sigma) = \tcVotes[r][q] \,\}$
                    
                    \State let $\voteData = \{\, (q, \rank(\qc)) \mid (q, \qc, \sigma) \in \votes \,\}$ 
                    \State \atadd{let $\qcMax = \max \{\qc \mid (q, \qc, \sigma) \in \votes\}$}
                    \State let $\tc = (r, \voteData, \comb(\{\, (q, \sigma) \mid (q, \qc, \sigma) \in \tcVotes[r] \,\}))$ \label{line:recv-tc-vote:form-tc}
                    \State $\TryAdvanceRound(r + 1, \rTC(\tc\atadd{, \qcMax}))$ \label{line:recv-tc-vote:advance-round}
                \EndIf
            \EndIf
        \EndUpon
    \end{smartalgorithmic}

    \addtocounter{raptrparts}{1}
    \caption{{\Raptr}: part {\theraptrparts} (commit and timeouts)}
    \label{alg:raptr-last}
    \label{alg:raptr:cc-and-tc}
\end{algorithm*}

\FloatBarrier
\section{Formal proof of correctness}\label{sec:proof}

    


\subsection{Safety proof}\label{subsec:proof:safety}

\begin{lemma}[Supermajority Quorum Intersection]
    For any two sets of replicas $Q_1, Q_2$ such that $|Q_1| \ge \quorumSize$ and $|Q_2| \ge \quorumSize$, there is at least one honest replica in $Q_1 \cap Q_2$.
\end{lemma}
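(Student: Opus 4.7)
The plan is a standard inclusion–exclusion / pigeonhole argument, adapted to the supermajority quorum size $\quorumSize = \lceil (n+f+1)/2 \rceil$ used in {\Raptr}.

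First, I would bound the size of the intersection from below by inclusion–exclusion. Since $Q_1, Q_2 \subseteq \Pi$ and $|\Pi| = n$, we have $|Q_1 \cup Q_2| \le n$, so
\[
    |Q_1 \cap Q_2| = |Q_1| + |Q_2| - |Q_1 \cup Q_2| \ge 2\left\lceil \tfrac{n+f+1}{2} \right\rceil - n \ge (n+f+1) - n = f+1.
\]
Thus $|Q_1 \cap Q_2| \ge f+1$.

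Second, I would invoke the corruption bound from the system model (\Cref{sec:prelim:model}): at most $f$ replicas in $\Pi$ are malicious. Since the intersection $Q_1 \cap Q_2$ contains at least $f+1$ replicas and at most $f$ of them can be malicious, by the pigeonhole principle at least one replica in $Q_1 \cap Q_2$ is honest, which is what we wanted to show.

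The only mildly subtle step is the inequality $2\lceil (n+f+1)/2 \rceil \ge n+f+1$, which holds trivially because $\lceil x \rceil \ge x$ for any real $x$; no case split on the parity of $n+f+1$ is needed. There is no real obstacle here — the argument is a one-line quorum intersection computation — and the lemma will be used repeatedly in the subsequent safety proof to conclude that any two supermajority quorums (QC-vote quorums, CC-vote quorums, TC-vote quorums) share an honest replica whose signed behavior can be pitted against itself to derive a contradiction.
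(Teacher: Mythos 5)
Your proof is correct and follows essentially the same route as the paper's: the same inclusion--exclusion bound $|Q_1 \cap Q_2| \ge 2\quorumSize - n \ge f+1$ followed by the observation that at most $f$ replicas are malicious. The only difference is that you unfold the ceiling in $\quorumSize$ explicitly, which the paper leaves implicit.
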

\begin{proof}
    $|Q_1 \cap Q_2| = |Q_1| + |Q_2| - |Q_1 \cup Q_2| \ge 2 \quorumSize - n \ge f + 1$.
    Since there are at least $f+1$ replicas in the intersection and at most $f$ of them are malicious, at least one replica in the intersection must be honest.
\end{proof}

\begin{lemma} \label{lem:no-conflicting-qcs-same-round}
    There cannot be two valid QCs in the same round for two different blocks, i.e., $\qc_1.\round = \qc_2.\round$, then $\qc_1.\hash = \qc_2.\hash$.
\end{lemma}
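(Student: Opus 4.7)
The plan is to argue by contradiction using the Supermajority Quorum Intersection lemma that was just proved, together with the fact that an honest replica QC-votes for at most one block per round. Suppose two valid QCs $\qc_1, \qc_2$ exist with $\qc_1.\round = \qc_2.\round = r$ but $\qc_1.\hash \neq \qc_2.\hash$. By $\VerifyQC$ (\cref{line:verify-qc:start}), each QC contains partial signatures from at least $\quorumSize$ distinct replicas, so their signer sets $Q_1, Q_2$ satisfy the hypothesis of the intersection lemma, hence some honest replica $p \in Q_1 \cap Q_2$ contributed a valid partial signature to each.

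Next I would show that an honest $p$ never signs two QC-votes with different block hashes in the same round. Inspecting $\QCVote()$, the signature is produced on $\mQCVote||H(\proposal[\rCur])||\rCur||\prefix$, where $\proposal[\rCur]$ is set in \cref{line:recv-proposal:set-proposal-var} only while $\proposal[r] = \bot$, so at most one block is ever recorded under round $r$. The current round $\rCur$ is monotonically advanced in $\TryAdvanceRound$, so once $p$ leaves round $r$ it does not return. Therefore every QC-vote $p$ issues in round $r$ carries the same hash $H(\proposal[r])$.

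Combining the two observations, $p$'s partial signature embedded in $\qc_1$ and $\qc_2$ must be over messages containing the same hash, so $\qc_1.\hash = \qc_2.\hash$, contradicting the assumption. The final step is to invoke the unforgeability of the aggregate signature scheme from \Cref{def:NIAS}, already noted implicitly in \Cref{sec:notation}, to rule out the possibility that the adversary fabricated $p$'s contribution in one of the QCs without $p$ actually signing.

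The main obstacle I expect is purely bookkeeping: making precise the statement ``an honest replica signs a QC-vote on at most one block hash per round,'' since the pseudocode allows two invocations of $\QCVote()$ per round (via the $\tQCVote$ timer and via the full-prefix condition), and also permits re-signing with a larger prefix. I would handle this by noting that both code paths sign $H(\proposal[\rCur])$, and that $\proposal[\rCur]$ is immutable once assigned, so the hash component is identical across all QC-votes an honest replica produces in a given round, regardless of the prefix value.
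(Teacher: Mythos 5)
Your proof is correct and follows essentially the same route as the paper's: quorum intersection plus unforgeability yields a common honest signer, and the immutability of $\proposal[r]$ (set only while $\bot$, \cref{line:recv-proposal:set-proposal-var}) ensures that signer's QC-votes in round $r$ all carry the same block hash. Your extra care about the two $\QCVote()$ invocations and prefix re-signing is a welcome elaboration of a step the paper states more tersely, but it is not a different argument.
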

\begin{proof}
    Let $\qc_1$ and $\qc_2$ be two valid QCs such that $\qc_1.\round = \qc_2.\round$.
    By definition (\Cref{sec:raptr:description}, \cref{line:verify-qc:start}), a valid QC contains an aggregate signature from a quorum of replicas.
    From the unforgeability of aggregate signatures and the super-majority quorum intersection property, a signature from at least one honest replica contributed to both $\qc_1$ and $\qc_2$.
    %
    %
    Honest replicas only sign QC-votes on the first proposal they receive in a round from the leader (\cref{line:qc-vote:sign-and-send,line:recv-proposal:set-proposal-var,line:recv-proposal:if}).
    Thus, an honest replica will never sign two QC-votes with different block hashes and the same round number.
    Therefore, $\qc_1.\hash = \qc_2.\hash$. 
\end{proof}


\begin{lemma} \label{lem:finality-main}
    If $\qc_1$ is a directly committed QC (\cref{line:commit-qc:start}), then for any $\qc_2$ such that $\rank(\qc_1) \le \rank(\qc_2)$, $\qc_1 \preceq \qc_2$.
\end{lemma}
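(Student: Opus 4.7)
The plan is to proceed by strong induction on $\qc_2.\round$. In the base case $\qc_2.\round = \qc_1.\round$, \Cref{lem:no-conflicting-qcs-same-round} forces $\qc_1.\hash = \qc_2.\hash$, and the rank hypothesis gives $\qc_2.\prefix \ge \qc_1.\prefix$, so $\qc_1 \preceq \qc_2$ follows directly from the definition of $\preceq$.

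For the inductive step, let $r_1 = \qc_1.\round$, $r_2 = \qc_2.\round > r_1$, and let $\qc_p = \block(\qc_2).\qcParent$. Since $\qc_p \preceq \qc_2$ by construction of $\chain(\qc_2)$, it suffices to prove $\qc_1 \preceq \qc_p$. I plan to reduce this to a single inequality: $\rank(\qc_p) \ge \rank(\qc_1)$. Once that is in hand, either $\qc_p.\round > r_1$ and the induction hypothesis applied to $\qc_p$ yields $\qc_1 \preceq \qc_p$, or $\qc_p.\round = r_1$ and \Cref{lem:no-conflicting-qcs-same-round} combined with the rank inequality again yields $\qc_1 \preceq \qc_p$; the case $\qc_p.\round < r_1$ is ruled out by the rank claim itself.

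To establish the rank claim, I would case-split on the entry reason in $\block(\qc_2)$, combined with how $\qc_1$ became directly committed (either as $\qcMin$ of some CC $\cc_1$ via the rule at line~\ref{line:recv-cc-vote:commit-qc}, or via the two-chain rule at line~\ref{line:on-new-qc:commit-qc} through some QC certifying a round-$(r_1+1)$ block whose parent is $\qc_1$). The $\rFullQC$ case and the $\rCC(\cc_p,\qc_p)$ case with $r_2 - 1 > r_1$ are immediate, because $\VerifyEntryReason$ already forces $\qc_p.\round = r_2 - 1 > r_1$. For $\rCC(\cc_p,\qc_p)$ with $r_2 - 1 = r_1$ and $\qc_1$ CC-committed, I would intersect the two CC-vote quorums: an honest replica $p$ in the intersection sent a single CC-vote for round $r_1$, on some $\qc'$ with $\rank(\qc') \ge \rank(\qc_1)$ (because $\qc_1 = \qcMin$ of $\cc_1$), and the very same vote sits inside $\cc_p$, giving $\qc_p.\prefix \ge \cc_p.\extendPrefix \ge \qc'.\prefix \ge \qc_1.\prefix$. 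The remaining cases are all $\rTC(\tc_p,\qc_p)$: here I would intersect $\tc_p$'s TC-vote quorum with either $\cc_1$'s CC-vote quorum or, in the two-chain scenario, the QC-vote quorum that certified the round-$(r_1+1)$ block; an honest replica $p$ in the intersection has, by the code of $\OnNewQC$ (in particular line~\ref{line:on-new-qc:update-qc-high} executed before the CC-vote or QC-vote was cast), already raised $\qcHigh$ to rank at least $\rank(\qc_1)$ at the moment of its earlier action, so by monotonicity of $\qcHigh$ the QC $p$ attaches to its TC-vote also has rank at least $\rank(\qc_1)$; hence $\rank(\qc_p) \ge \tc_p.\extendRank \ge \rank(\qc_1)$.

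The step I expect to be the main obstacle is the scheduling argument inside the $\rTC$ subcase when $r_2 - 1 = r_1$ and $\qc_1$ was CC-committed: I must argue that any honest replica which both CC-voted and TC-voted in round $r_1$ necessarily performed the CC-vote \emph{first}, because a TC-vote sets $\rTimeout := \rCur = r_1$ and thereafter the CC-vote guard $\qc.\round > \rTimeout$ at line~\ref{line:on-new-qc:if-cc-vote} fails forever. This ordering is precisely what allows transporting the $\qcHigh$ lower bound from the CC-vote moment forward to the TC-vote moment via $\qcHigh$ monotonicity, closing the loop of the argument.
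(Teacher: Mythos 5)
Your overall strategy coincides with the paper's: induction on $\qc_2.\round$, reduction to the single inequality $\rank(\block(\qc_2).\qcParent) \ge \rank(\qc_1)$, and a case split on the entry reason of $\block(\qc_2)$ crossed with how $\qc_1$ was directly committed (the paper packages the two commit mechanisms as \Cref{lem:finality-cc,lem:finality-2chain}). Your treatment of the $\rCC$/CC-committed case and of the $\rTC$ cases via $\qcHigh$ monotonicity and the ``CC-vote (resp.\ QC-vote) happens before TC-vote'' ordering is exactly the paper's argument.

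However, there is a genuine hole in the sub-cases where $\qc_1$ is committed by the \emph{two-chain} rule (\cref{line:on-new-qc:commit-qc}) and $\qc_2.\round = \qc_1.\round + 1$. First, the combination ($\rCC$ entry reason, $r_2 - 1 = r_1$, two-chain commit) is simply absent from your enumeration, and your argument for the corresponding CC-committed case cannot be transplanted because it needs the certificate $\cc_1$, which does not exist in the two-chain scenario. Second, and more seriously, in the $\rTC$/two-chain sub-case with $r_2 = r_1 + 1$ your quorum intersection fails: there $\tc_p.\round = r_1$ while the QC-vote quorum for the committing child $\widetilde\qc$ is in round $r_1+1$, so the temporal ordering flips---an honest replica in the intersection must have TC-voted in round $r_1$ \emph{before} entering round $r_1+1$ and QC-voting there, and the guard $\rCur > \rTimeout$ at \cref{line:qc-vote:if} does not forbid this since $r_1 + 1 > r_1$. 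Its $\qcHigh$ at TC-vote time therefore carries no lower bound related to $\qc_1$ (which may not even exist yet), and $\tc_p.\extendRank$ can be strictly below $\rank(\qc_1)$. The paper closes both sub-cases by a different device: when $\qc_2.\round = \qc_1.\round + 1$, \Cref{lem:no-conflicting-qcs-same-round} applied to $\qc_2$ and $\widetilde\qc$ shows they reference the same block, hence $\block(\qc_2).\qcParent = \block(\widetilde\qc).\qcParent = \qc_1$ outright. A further, minor, slip: your stated justification for the $\rFullQC$ case only covers $r_2 - 1 > r_1$; when $r_2 - 1 = r_1$ you additionally need $\qc_p.\prefix = \nSubBlocks \ge \qc_1.\prefix$, as in case~(\ref{itm:proof:cc-commit:full-qc}) of \Cref{lem:finality-cc}.
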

\begin{proof}
    Let us use induction on $\qc_2.\round$.
    
    \textbf{Base case:}
    The case when $\qc_2.\round = \qc_1.\round$ trivially follows from \Cref{lem:no-conflicting-qcs-same-round}.

    \textbf{Induction hypothesis:}
    Suppose that, for some $r \ge \qc_1.\round$, for any $\qc_2$ such that $\qc_2.\round \le r - 1$ and $\rank(\qc_1) \le \rank(\qc_2)$, it holds that $\qc_1 \preceq \qc_2$.
    %
    
    \textbf{Induction step:}
    Let us now prove that for $\qc_2.\round = r$.
    Consider $\qc^* = \block(\qc_2).\qcParent$.
    Since a valid block must have a parent from a lower round (\crefrange{line:verify-entry-reason:start}{line:verify-entry-reason:end}), $\qc^*.\round \le \qc_2.\round - 1$.
    If $\rank(\qc^*) \ge \rank(\qc_1)$, by the induction hypothesis, $\qc_1 \preceq \qc^* \preceq \qc_2$.
    The rest of the proof will be dedicated to showing that, indeed, $\rank(\qc^*) \ge \rank(\qc_1)$.
    %
    
    There are 2 ways that a QC can get directly committed: when a \atreplace{node}{replica} forms a CC (\cref{line:recv-cc-vote:commit-qc}) and through the 2-chain commit rule (\cref{line:on-new-qc:commit-qc}).
    {We will consider the two cases separately, in \Cref{lem:finality-cc,lem:finality-2chain}.}
\end{proof}

\begin{lemma} \label{lem:finality-cc}
    Let $\qc_1$ be a QC directly committed through a CC (\cref{line:recv-cc-vote:commit-qc}), $\qc_2$ be a valid QC such that $\qc_2.\round > \qc_1.\round$, and $\qc^* := \block(\qc_2).\qcParent$. Then, $\rank(\qc^*) \ge \rank(\qc_1)$.
\end{lemma}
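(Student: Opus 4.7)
Our task is to show $\rank(\qc^*) \ge \rank(\qc_1)$; the outer induction on $\qc_2.\round$ in the proof of Lemma~\ref{lem:finality-main} then closes the chain via $\qc_1 \preceq \qc^* \preceq \qc_2$. Let $r_1 = \qc_1.\round$ and let $\cc_1$ denote the commit certificate through which $\qc_1$ was directly committed on \cref{line:recv-cc-vote:commit-qc}. By the $\qcMin$-rule on \cref{line:recv-cc-vote:qc-min}, every vote in $\cc_1$ was cast in round $r_1$ for a QC whose rank is $\ge \rank(\qc_1)$, so in particular its prefix is $\ge \qc_1.\prefix$. The plan is then a case analysis on the entry reason of $\block(\qc_2)$, which by validity satisfies $\reason.\round = r_2$.

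The $\rFullQC(\qc^*)$ case is immediate: $\qc^*.\round = r_2 - 1 \ge r_1$ and, when $r_2 - 1 = r_1$, also $\qc^*.\prefix = \nSubBlocks \ge \qc_1.\prefix$. The $\rCC(\cc, \qc^*)$ case likewise reduces to $r_2 - 1 = r_1$: super-majority quorum intersection between $\cc$ and $\cc_1$ (both of round $r_1$) produces an honest replica that CC-voted in both, and since an honest replica CC-votes at most once per round (\cref{line:on-new-qc:if-cc-vote}), the prefix it contributes to $\cc.\votePrefixes$ equals its contribution to $\cc_1.\votePrefixes$, which is $\ge \qc_1.\prefix$; hence $\cc.\extendPrefix \ge \qc_1.\prefix$, and the check on \cref{line:verify-entry-reason:cc-return} gives $\qc^*.\prefix \ge \cc.\extendPrefix \ge \qc_1.\prefix$. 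For $\rTC(\tc, \qc^*)$ with $\tc.\round = r_1$, I intersect $\tc$ with $\cc_1$ to obtain an honest $h$ that both CC-voted and TC-voted in round $r_1$. TC-voting (\cref{line:round-timeout:tc-vote}) sets $\rTimeout = r_1$, which afterwards blocks any CC-vote in round $r_1$ via the $\qc.\round > \rTimeout$ guard; so $h$ must have CC-voted first, and at that moment $\OnNewQC$ had already lifted $h$'s $\qcHigh$ to at least the rank of the CC-voted QC. By monotonicity of $\qcHigh$, $h$'s subsequent TC-vote attaches a QC of rank $\ge \rank(\qc_1)$, giving $\tc.\extendRank \ge \rank(\qc_1)$ and hence $\rank(\qc^*) \ge \rank(\qc_1)$ via \cref{line:verify-entry-reason:tc-return}.

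The remaining case is $\rTC(\tc, \qc^*)$ with $\tc.\round > r_1$. For it, I will first establish the following invariant by induction on $r$: every honest replica that enters a round $r > r_1$ has $\qcHigh$ of rank $\ge \rank(\qc_1)$ at the moment of entry (hence forever after, by monotonicity). The base case $r = r_1 + 1$ reuses exactly the three bullets above: whichever entry reason the replica uses, the attached $\reason.\qc$ has rank $\ge \rank(\qc_1)$, and $\OnNewQC(\reason.\qc)$ is invoked before the round is entered. In the inductive step, for $r > r_1 + 1$ the $\rFullQC$ and $\rCC$ cases attach a QC of round $r - 1 > r_1$ (so the rank is strictly larger in its first coordinate), while for $\rTC$ the TC sits in round $r - 1 > r_1$ and contains at least $\quorumSize - f \ge 1$ honest TC-votes, each of which (by the IH and $\qcHigh$-monotonicity) attaches a QC of rank $\ge \rank(\qc_1)$; therefore $\tc.\extendRank \ge \rank(\qc_1)$ and the attached $\reason.\qc$ inherits this. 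Feeding the invariant back into the outstanding subcase yields $\tc.\extendRank \ge \rank(\qc_1)$ via any honest TC-voter in $\tc$. The main obstacle I anticipate is ensuring the auxiliary invariant does not become circular with Lemma~\ref{lem:finality-cc}: it does not, because its base case relies only on super-majority quorum intersection with $\cc_1$, the single-CC-vote-per-round rule, and the CC-before-TC-vote ordering for honest replicas, all of which are protocol-level facts independent of this lemma.
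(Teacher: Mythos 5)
Your proof is correct, and for the $\rFullQC$ and $\rCC$ cases it coincides with the paper's argument (quorum intersection between $\cc_2$ and $\cc_1$, the one-CC-vote-per-round rule, and the $\extendPrefix$ check on \cref{line:verify-entry-reason:cc-return}). Where you genuinely diverge is the $\rTC$ case: you split on whether $\tc.\round = \qc_1.\round$ or $\tc.\round > \qc_1.\round$, and for the latter you prove an auxiliary invariant by a fresh induction over rounds (every honest replica entering a round above $\qc_1.\round$ holds a $\qcHigh$ of rank at least $\rank(\qc_1)$). The paper instead handles all TC rounds uniformly with a single quorum intersection between the CC-voters of $\cc_1$ (round $\qc_1.\round$) and the TC-voters of $\tc$ (round $\tc.\round \ge \qc_1.\round$): the honest replica $q$ in the intersection must have CC-voted before TC-voting, because TC-voting sets $\rTimeout$ to $\tc.\round \ge \qc_1.\round$ and the guard on \cref{line:on-new-qc:if-cc-vote} would then block a round-$\qc_1.\round$ CC-vote; monotonicity of $\qcHigh$ then gives $\rank(\qc_t) \ge \rank(\qc_c) \ge \rank(\qc_1)$ and hence $\tc.\extendRank \ge \rank(\qc_1)$ directly. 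This is exactly the argument you already use for your $\tc.\round = \qc_1.\round$ subcase --- it goes through verbatim when $\tc.\round > \qc_1.\round$, so your auxiliary invariant is sound but unnecessary. What your detour buys is a reusable, HotStuff-style ``locked rank'' invariant that is a clean standalone fact about the protocol; what the paper's route buys is a shorter proof with no nested induction and no need to argue non-circularity against \Cref{lem:finality-main}.
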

\begin{proof}
    Let $\cc_1$ be the CC formed on \cref{line:recv-cc-vote:form-cc} right before committing $\qc_1$ on \cref{line:recv-cc-vote:commit-qc}, and let $Q_1$ be the set of replicas whose CC-votes were used to form it.
    %

    Let us consider $\block(\qc_2).\entryReason$. It must be valid according to the $\VerifyEntryReason$ function (\cref{line:verify-entry-reason:start}). Otherwise, no honest replica would have voted for $\qc_2$ and it could not have been formed.
    %
    There are 3 cases to consider:
    \begin{enumerate}
        \item $\block(\qc_2).\entryReason = \rFullQC\atadd{(\qc^*)}$ (\cref{line:verify-entry-reason:full-qc-if}): By \cref{line:verify-entry-reason:full-qc-return}, $\qc^*.\round = \qc_2.\round - 1 \ge \qc_1.\round$ and $\qc^*.\prefix = \nSubBlocks \ge \qc_1.\prefix$ $\Rightarrow$ $\rank(\qc^*) \ge \rank(\qc_1)$ (QED). \label{itm:proof:cc-commit:full-qc}

        \item $\block(\qc_2).\entryReason = \rCC(\cc_2\atadd{, \qc^*})$ (\cref{line:verify-entry-reason:cc-if}): 
        %
        If $\cc_2.\round > \qc_1.\round$, then, by \cref{line:verify-entry-reason:cc-if}, $\qc^*.\round = \cc_2.\round > \qc_1.\round$ (QED).
        Since $\cc_2.\round = \qc_2.\round - 1 \ge \qc_1.\round$, the only case left to consider is when $\cc_2.\round = \qc_1.\round$.
        In this case, $\qc^*.\round = \cc_2.\round = \qc_1.\round$ and we have left to prove that $\qc^*.\prefix \ge \qc_1.\prefix$.

        Let $Q_2$ be the set of replicas whose votes were used to form $\cc_2$ (i.e., whose votes are recorded in $\cc_2.\votePrefixes$).
        By the supermajority quorum intersection, there must be an honest replica in the intersection of $Q_1$ and $Q_2$.
        Since honest replicas only issue one CC-vote in any given round (\cref{line:on-new-qc:if-cc-vote}), by the unforgeability property of aggregate signatures, there must be at least one CC-vote used in both $\cc_1$ and $\cc_2$. Let $\qc'$ be the $\qc$ attached to that vote.
        %
        %
        Since we take the minimum when choosing which QC to commit (\cref{line:recv-cc-vote:qc-min}) and require $\qc^*$ to have the prefix at least as large as the maximum of the votes used to form $\cc_2$ (\cref{line:verify-entry-reason:cc-return}), 
        $\qc^*.\prefix \ge \max \{\, \prefix \mid (q, \prefix) \in \cc_2.\votePrefixes \,\} \ge \qc'.\prefix \ge \min \{\, \prefix \mid (q, \prefix) \in \cc_1.\votePrefixes \,\} = \qc_1.\prefix$.
        Hence, $\rank(\qc^*) \ge \rank(\qc_1)$ (QED).

        \item $\block(\qc_2\atadd{, \qc^*}).\entryReason = \rTC(\tc)$ (\cref{line:verify-entry-reason:tc-if}): 
        %
        %
        Let $Q_2$ be the quorum of replicas that TC-voted for $\tc$ (i.e., whose votes are recorded in $\tc.\voteData$).
        By the unforgeability property of aggregate signatures, for each honest replica in $Q_2$, it must be that each honest replica in $Q_2$ issues $\mTCVote$ message (\cref{line:round-timeout:tc-vote}).
        By the supermajority quorum intersection, there must be an honest replica $q \in (Q_1 \cap Q_2)$.
        Let $\qc_{c}$ and $\qc_{t}$ be the QCs that $q$ attached to its CC-vote (\cref{line:on-new-qc:cc-vote}) and TC-vote (\cref{line:round-timeout:tc-vote}) respectively.
        Honest replicas only CC-vote if they have not yet TC-voted in this or higher rounds (\cref{line:on-new-qc:if-cc-vote}) and 
        $\tc.\round = \qc_2.\round - 1 \ge \qc_1.\round = \cc_1.\round$.
        Hence, $q$ must have CC-voted for $\cc_1$ before TC-voting for $\tc$ and $\rank(\qc_{t}) \ge \rank(\qc_{c})$.
        Since we take the minimum when choosing which QC to commit (\cref{line:recv-cc-vote:qc-min}) and verify that $\rank(\qc^*) \ge \tc.\extendRank = \max \{\,\rank \mid (q,\rank) \in \tc.\voteData) \,\}$ (\cref{line:verify-entry-reason:tc-return}), $\rank(\qc^*) \ge \rank(\qc_{t}) \ge \rank(\qc_{c}) \ge \rank(\qc_1)$ (QED).
        %
    \end{enumerate}
\end{proof}

\begin{lemma} \label{lem:finality-2chain}
    Let $\qc_1$ be a QC directly committed via the two-chain rule (\cref{line:on-new-qc:commit-qc}), $\qc_2$ be a valid QC such that $\qc_2.\round > \qc_1.\round$, and $\qc^* := \block(\qc_2).\qcParent$. Then, $\rank(\qc^*) \ge \rank(\qc_1)$.
\end{lemma}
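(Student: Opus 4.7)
The plan is to mirror the case analysis of \Cref{lem:finality-cc}, using the certified child block of $\qc_1$ guaranteed by the two-chain rule in place of the CC $\cc_1$. By the trigger on \cref{line:on-new-qc:commit-qc}, the direct commit of $\qc_1$ implies the existence of a block $B$ with $B.\qcParent = \qc_1$ and $B.\round = \qc_1.\round + 1$, together with a valid QC $\qc_B$ referencing $B$ (either the argument $\qc$ passed to $\OnNewQC$, or an intermediate $\qcParent$ inside $\chain(\qc)$). Let $Q_B$ denote the quorum of replicas whose QC-votes compose $\qc_B$; this will play the role of the anchor quorum for the rest of the argument.

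I would first dispose of the easy cases. If $\qc_2.\round = \qc_1.\round + 1$, then \Cref{lem:no-conflicting-qcs-same-round} forces $\qc_2$ to reference the same block as $\qc_B$, so $\qc^* = B.\qcParent = \qc_1$ and $\rank(\qc^*) = \rank(\qc_1)$. Otherwise $\qc_2.\round \ge \qc_1.\round + 2$, and I would split on $\block(\qc_2).\entryReason$ (which must pass $\VerifyEntryReason$, else no honest replica would have contributed to $\qc_2$). In the $\rFullQC$ and $\rCC$ sub-cases, \cref{line:verify-entry-reason:full-qc-return,line:verify-entry-reason:cc-return} force $\qc^*.\round = \qc_2.\round - 1 \ge \qc_1.\round + 1 > \qc_1.\round$, so $\rank(\qc^*) > \rank(\qc_1)$ is immediate.

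The substantive case is $\rTC(\tc, \qc^*)$ with $\tc.\round = \qc_2.\round - 1 \ge \qc_1.\round + 1$. Letting $Q_{tc}$ be the quorum of TC-voters of $\tc$, super-majority quorum intersection supplies an honest $q \in Q_B \cap Q_{tc}$. The key claim is that the QC $q$ attaches to its TC-vote has rank at least $\rank(\qc_1)$; once established, this yields $\tc.\extendRank \ge \rank(\qc_1)$, and then \cref{line:verify-entry-reason:tc-return} gives $\rank(\qc^*) \ge \tc.\extendRank \ge \rank(\qc_1)$. To establish the claim I would combine two observations: (i) in order to QC-vote on $B$, $q$ must have received and processed the proposal for $B$, whose handler invokes $\OnNewQC(B.\entryReason.\qc) = \OnNewQC(\qc_1)$ and therefore raises $q.\qcHigh$ to at least $\qc_1$ via \cref{line:on-new-qc:update-qc-high}; and (ii) $q$'s QC-vote on $B$ necessarily precedes any of its TC-votes in round $\ge B.\round$, since the guard on \cref{line:qc-vote:if} requires $\rCur > \rTimeout$ whereas timing out on \cref{line:round-timeout:start} sets $\rTimeout = \rCur$. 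Monotonicity of $\qcHigh$ along \cref{line:on-new-qc:update-qc-high} then closes the argument. The main obstacle I anticipate is precisely this temporal ordering step: carefully justifying at the pseudocode level that $q.\qcHigh$ has absorbed $\qc_1$ before $q$ emits its TC-vote, so that quorum intersection actually transports $\qc_1$'s rank into $\tc.\extendRank$.
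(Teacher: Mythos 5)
Your proof is correct and follows essentially the same route as the paper's: anchor on the round-$(\qc_1.\round+1)$ certified child of $\qc_1$, dispatch the adjacent-round and $\rFullQC$/$\rCC$ cases by round arithmetic and \Cref{lem:no-conflicting-qcs-same-round}, and handle the $\rTC$ case by intersecting the child's QC-vote quorum with the TC-vote quorum and using the $\rCur > \rTimeout$ guard to order the votes. The only difference is presentational: you make explicit the $\qcHigh$-monotonicity step (via $\OnNewQC(B.\entryReason.\qc)$ in the proposal handler) that the paper leaves implicit when asserting $\rank(\qc_1) \le \rank(\qc_t)$.
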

\begin{proof}
    Let $\widetilde\qc$ denote the child of $\qc_1$ in the next round (i.e., $\block(\widetilde\qc).\qcParent = \qc_1$ and $\widetilde\qc.\round = \qc_1.\round + 1$) that triggered the commit on \cref{line:on-new-qc:commit-qc}.
    %
    %
    %
    As in \Cref{lem:finality-cc}, let us consider all 3 possible options for $\block(\qc_2).\entryReason$:
    \begin{enumerate}
        \item $\block(\qc_2).\entryReason = \rFullQC\atadd{(\qc^*)}$ (\cref{line:verify-entry-reason:full-qc-if}): Identical to the case~(\ref{itm:proof:cc-commit:full-qc}) in \Cref{lem:finality-cc}. By \cref{line:verify-entry-reason:full-qc-return}, $\qc^*.\round = \qc_2.\round - 1 \ge \qc_1.\round$ and $\qc^*.\prefix = \nSubBlocks \ge \qc_1.\prefix$ $\Rightarrow$ $\rank(\qc^*) \ge \rank(\qc_1)$ (QED).

        \item $\block(\qc_2).\entryReason = \rCC(\cc\atadd{, \qc^*})$ (\cref{line:verify-entry-reason:cc-if}): If $\qc_2.\round = \qc_1.\round + 1$, then, by \Cref{lem:no-conflicting-qcs-same-round}, $\qc^* = \block(\qc_2).\qcParent = \block(\widetilde\qc).\qcParent = \qc_1$ (QED).
        Otherwise ($\qc_2.\round > \qc_1.\round + 1$), by \cref{line:verify-entry-reason:cc-return}, $\qc^*.\round = \qc_2.\round - 1 > \qc_1.\round$ (QED).

        \item $\block(\qc_2).\entryReason = \rTC(\tc\atadd{, \qc^*})$ (\cref{line:verify-entry-reason:tc-if}):
        Analogously to the previous case, if $\qc_2.\round = \qc_1.\round + 1$, then $\qc^* = \qc_1$ (QED).
        Hence, let us consider the case when $\qc_2.\round > \qc_1.\round + 1$.
        let $Q_1$ be the set of replicas that QC-voted for $\widetilde\qc$, and let $Q_2$ the set of replicas that TC-voted for $\tc$ (i.e., whose votes are recorded in $\tc.\voteData$).
        %
        There must be an honest replica $q \in (Q_1 \cap Q_2)$.
        Let $\qc_{t}$ be the QC that $q$ attached to its TC-vote (\cref{line:round-timeout:tc-vote}).
        Honest replicas only QC-vote if they have not yet TC-voted in this or higher rounds (\cref{line:qc-vote:if}).
        Hence, as $\qc_2.\round > \qc_1.\round + 1 = \widetilde\qc.\round$, $q$ must have QC-voted for $\widetilde\qc$ before TC-voting for $\tc$ and $\rank(\qc_1) \le \rank(\qc_{t}) \le \tc.\extendRank \le \rank(\qc^*)$ (QED).
    \end{enumerate}
    
\end{proof}

\begin{lemma} \label{lem:all-committed-qcs-are-compatible}
    For any two directly committed QCs, $\qc_1$ and $\qc_2$, either $\qc_1 \preceq \qc_2$ or vice versa.
\end{lemma}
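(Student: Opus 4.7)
The plan is to reduce this lemma directly to \Cref{lem:finality-main}. Since ranks are tuples $(\round, \prefix)$ compared lexicographically, the rank order is total, so for any two directly committed QCs $\qc_1$ and $\qc_2$ we must have either $\rank(\qc_1) \le \rank(\qc_2)$ or $\rank(\qc_2) \le \rank(\qc_1)$. Without loss of generality assume the former.

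Now I would apply \Cref{lem:finality-main} with $\qc_1$ as the directly committed QC and $\qc_2$ as the ``any'' QC of at least equal rank. The hypothesis of that lemma requires only that $\qc_2$ be a valid QC (in particular it does not need to be directly committed), and a directly committed QC is certainly valid since $\CommitQC$ is only invoked on QCs produced either by the CC formation rule (\cref{line:recv-cc-vote:commit-qc}) or by the two-chain rule (\cref{line:on-new-qc:commit-qc}), both of which operate on verified QCs. Thus the lemma yields $\qc_1 \preceq \qc_2$, which is exactly the conclusion. The symmetric case is handled identically with the roles of $\qc_1$ and $\qc_2$ swapped.

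I do not expect any genuine obstacle here; the real work has already been done in \Cref{lem:finality-cc,lem:finality-2chain}, which together establish \Cref{lem:finality-main}. The only thing worth double-checking is that the case $\rank(\qc_1) = \rank(\qc_2)$ is consistent with the conclusion: in that case equal round numbers force $\qc_1.\hash = \qc_2.\hash$ by \Cref{lem:no-conflicting-qcs-same-round}, and equal prefixes then make $\chain(\qc_1) = \chain(\qc_2)$, so both $\qc_1 \preceq \qc_2$ and $\qc_2 \preceq \qc_1$ hold, which is compatible with the commutativity of $\preceq$ noted in \Cref{sec:notation}.
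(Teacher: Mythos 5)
Your proposal is correct and follows exactly the paper's own argument: the paper proves this lemma in one line by invoking \Cref{lem:finality-main} together with the total ordering of ranks, which is precisely your reduction. The extra care you take with the equal-rank case and with noting that \Cref{lem:finality-main} only requires $\qc_2$ to be valid is a harmless elaboration of the same route.
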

\begin{proof}
    Follows directly from \Cref{lem:finality-main} and the fact that \atreplace{IDs}{the ranks} are totally ordered. 
\end{proof}

\begin{theorem}
    {\Raptr} satisfies the total order \atreplace{property}{and non-duplication properties} of BAB (\Cref{def:bab}).
\end{theorem}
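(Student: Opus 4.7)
The plan is to reduce both properties to the prefix-compatibility result already established in Lemma~\ref{lem:all-committed-qcs-are-compatible}, together with the sequence-level observation from Section~\ref{sec:notation} that $\qc_1 \preceq \qc_2$ implies $\messages(\chain(\qc_1))$ is a prefix of $\messages(\chain(\qc_2))$. The key bridge between ``commit'' and ``deliver'' is the $\CommitQC$ function: every $\aDeliver(m, \sn, q)$ executed on \cref{line:commit-qc:a-deliver} by an honest replica $p$ occurs inside some invocation $\CommitQC(\qc)$, where $\qc$ was directly committed either on \cref{line:recv-cc-vote:commit-qc} or \cref{line:on-new-qc:commit-qc} at $p$. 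Thus I will associate to every honest delivery event at $p$ a directly committed QC, and to every message $(m, \sn, q)$ delivered at $p$ the unique position it occupies inside $\messages(\chain(\qc))$.

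Non-duplication is essentially immediate: the guard on \cref{line:commit-qc:dedup-if} suppresses any repeat of a $(m, \sn, q)$ tuple, so I just need to note that the check refers to the replica's entire local history of deliveries, not just the current invocation of $\CommitQC$.

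For total order, I would argue as follows. Fix an honest $p$ that delivers $(m, \sn, q)$ before $(m', \sn', q')$, and let $\qc$, $\qc'$ be the directly committed QCs associated with those two deliveries at $p$. If $\qc = \qc'$, both messages lie inside the same sequence $\messages(\chain(\qc))$ and the delivery order inside $\CommitQC$ follows the fixed order of that sequence; if $\qc \ne \qc'$, then the variable $\qcCommitted$ monotonically grew in rank between the two invocations, so $\rank(\qc') > \rank(\qc)$, and Lemma~\ref{lem:finality-main} gives $\qc \preceq \qc'$, placing $(m, \sn, q)$ before $(m', \sn', q')$ in $\messages(\chain(\qc'))$. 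Now consider any other honest $p'$ that delivers both messages. Again associate directly committed QCs $\widetilde{\qc}$, $\widetilde{\qc}'$ to those deliveries at $p'$. By Lemma~\ref{lem:all-committed-qcs-are-compatible}, all four of $\qc, \qc', \widetilde{\qc}, \widetilde{\qc}'$ are pairwise $\preceq$-compatible, so all four chains embed as prefixes of a common maximal one. In that common chain the positions of $(m, \sn, q)$ and $(m', \sn', q')$ are fixed, which forces $p'$ to deliver them in the same order as $p$.

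The main obstacle I expect is bookkeeping rather than conceptual: making sure the ``position of a message inside a chain'' is well-defined across different committed QCs. This requires observing that $\messages(\chain(\qc))$ is determined by the chain of blocks and sub-block prefixes, which in turn, by the prefix relation $\preceq$, is stable under extension. A small additional subtlety is that $\CommitQC$ may be invoked concurrently with other events; I would handle this by arguing at the level of the sequentialized execution at each honest replica, since only the final linearized order of $\aDeliver$ calls at $p$ matters for the specification.
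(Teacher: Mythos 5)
Your proposal is correct and follows essentially the same route as the paper's proof: both reduce total order to \Cref{lem:all-committed-qcs-are-compatible} together with the fact that deliveries happen only inside $\CommitQC$ over $\messages(\chain(\qc))$ for a directly committed $\qc$, and both dispatch non-duplication via the check on \cref{line:commit-qc:dedup-if}. The only difference is presentational---you argue directly (using the monotonicity of $\qcCommitted$ and \Cref{lem:finality-main} to order the two deliveries at a single replica) where the paper argues by contradiction, and you are somewhat more explicit about the bookkeeping; neither changes the substance.
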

\begin{proof}
    For the sake of contradiction, suppose that an honest replica $p$ outputs $\aDeliver(m, \sn, q)$ before $\aDeliver(m', \sn', q')$ and an honest replica $p'$ outputs $\aDeliver(m', \sn', q')$ before $\aDeliver(m, \sn, q)$.
    The only way honest replicas deliver messages is on \cref{line:commit-qc:a-deliver}, when directly committing some QC (and indirectly committing its ancestors).
    \atrev{Hence, there must be two directly committed QCs, $\qc$ and $\qc'$, such that $(m, \sn, q)$ appears in $\messages(\chain(\qc))$ 
    and $(m', \sn', q')$ either appears in $\messages(\chain(\qc))$ after $(m, \sn, q)$ or not at all.
    Conversely for $\qc'$.
    However, by \Cref{lem:all-committed-qcs-are-compatible}, either $\messages(\chain(\qc_1))$ is a prefix of $\messages(\chain(\qc_2))$ or vice versa. A contradiction.}

    \atadd{Non-duplication is trivially enforced when delivering the messages (\cref{line:commit-qc:dedup-if}).}
\end{proof}

\subsection{Liveness proof}\label{subsec:proof:liveness}

\begin{lemma} \label{lem:correct-leaders-commit}
    If $\leader_r$ is correct and the first time a correct \atreplace{node}{replica} enters round $r$ is at time \atreplace{$t$}{$\tEnter$} after GST, then the block proposed by $\leader_r$ will be committed by all correct replicas.
\end{lemma}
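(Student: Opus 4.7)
The plan is to prove the lemma by a direct timing analysis after GST: I trace the protocol step by step from $\tEnter$, bounding the arrival time of each message class in terms of $\Delta$ and $\epsilon\Delta$, and show that a CC for the leader's block must form at every correct replica within the round timeout $(4+\epsilon)\Delta$. The argument needs two ingredients besides the synchrony bound: (i) the leader is indeed still at a round $\le r$ at time $\tEnter$, so it actually enters round $r$ and proposes; and (ii) no correct replica's $\tRoundTimeout$ can fire before it has issued its CC-vote, so the quorum of CC-votes actually materializes.

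First, I would show that $\leader_r$ multicasts a valid proposal $B$ by time $\tEnter + \Delta$. Either the leader itself is the first correct replica to enter round $r$ (and proposes immediately inside $\TryAdvanceRound$), or some other correct replica at time $\tEnter$ sends $\mAdvanceRound$ carrying a valid entry reason for round $r$, which reaches the leader by $\tEnter + \Delta$. To rule out the case where the leader has already skipped round $r$, I would argue that no valid entry reason for a round $> r$ can exist at $\tEnter$: by the verification rules in $\VerifyEntryReason$, any such reason requires a QC, CC, or TC for round $r$ or later, which in turn needs $\quorumSize$ signatures and hence participation from at least one honest replica; but by hypothesis no correct replica has entered round $r$ by $\tEnter$, so none has signed any QC-, CC-, or TC-vote for round $r$, and unforgeability of aggregate signatures then precludes such a certificate. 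Hence the leader enters round $r$ via $\TryAdvanceRound$ and proposes $B$ by $\tEnter + \Delta$.

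Second, I would chain the remaining message delays. All correct replicas receive $B$ by $\tEnter + 2\Delta$, at which point each starts its $\tQCVote$ timer of $\epsilon\Delta$; every correct replica therefore issues a $\mQCVote$ by $\tEnter + (2+\epsilon)\Delta$ (either on timer expiry via line~\ref{line:qc-vote-call-timer} or earlier via line~\ref{line:qc-vote-call-full} upon having full data, which does not hurt). These votes reach all correct replicas by $\tEnter + (3+\epsilon)\Delta$, so each one forms a QC referencing $B$ and, inside $\OnNewQC$, issues a $\mCCVote$ attaching that QC. Crucially, every correct replica's $\tRoundTimeout$ was set to $(4+\epsilon)\Delta$ at a time $\ge \tEnter$, so at $\tEnter + (3+\epsilon)\Delta$ we still have $\rCur > \rTimeout$ and the guard on line~\ref{line:on-new-qc:if-cc-vote} passes. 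CC-votes then reach all correct replicas by $\tEnter + (4+\epsilon)\Delta$, each assembles a CC, and invokes $\CommitQC$; by \Cref{lem:no-conflicting-qcs-same-round} every attached round-$r$ QC references the same block $B$, so the committed QC is one for $B$.

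The main obstacle is the tightness of the schedule: the critical path is exactly $(4+\epsilon)\Delta$, matching the timeout budget, so I must be careful about what happens at the boundary. The key observation is that CC formation and $\CommitQC$ on line~\ref{line:recv-cc-vote:commit-qc} are triggered unconditionally once $\quorumSize$ CC-votes are accumulated, with no guard on $\rTimeout$; therefore even if a correct replica's round timer fires at the instant its last CC-vote arrives, the commit still goes through. The only genuine requirement is that each correct replica CC-votes before its own timer fires, and this holds with strict slack $\Delta$ since CC-voting occurs by $\tEnter + (3+\epsilon)\Delta$ while the earliest possible timeout is $\tEnter + (4+\epsilon)\Delta$. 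A secondary subtlety, already handled above, is ruling out Byzantine interference via forged certificates for round $r$ prior to $\tEnter$.
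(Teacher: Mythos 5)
Your proposal follows essentially the same route as the paper: a forward timing analysis after GST showing the leader proposes by $\tEnter+\Delta$, all correct replicas QC-vote by $\tEnter+(2+\epsilon)\Delta$, form a QC and CC-vote by $\tEnter+(3+\epsilon)\Delta$ (safely inside the $(4+\epsilon)\Delta$ timeout budget), and commit one message delay later, with \Cref{lem:no-conflicting-qcs-same-round} ensuring all round-$r$ certificates refer to the leader's block. Your explicit argument that no valid entry reason for a round above $r$ can exist at $\tEnter$ (so the leader has not skipped round $r$) is a point the paper leaves implicit, and is a welcome addition.

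There is one step where your unconditional chain is looser than the paper's argument. You assert that \emph{every} correct replica issues a $\mQCVote$ in round $r$ by $\tEnter+(2+\epsilon)\Delta$. This can fail: a replica QC-votes early via the full-availability trigger, so a full-prefix QC for round $r$ may form well before $(2+\epsilon)\Delta$; a correct replica receiving it advances to round $r+1$ via $\rFullQC$, which stops its $\tQCVote$ timer, and since $\QCVote()$ votes on $\proposal[\rCur]$ with $\rCur=r+1$, that replica never QC-votes in round $r$. The conclusion is unharmed — a round-$r$ QC already exists in that scenario and the CC-vote path in $\OnNewQC$ still fires at every correct replica — but your stated premise is not literally true. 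The paper sidesteps this by defining $t_{QC}$ as the first time any correct replica forms or receives a round-$r$ QC, showing commit follows within $2\Delta$ of $t_{QC}$ whenever $t_{QC}\le\tEnter+(3+\epsilon)\Delta$, and then deriving a contradiction from the assumption $t_{QC}>\tEnter+(3+\epsilon)\Delta$ using exactly your message chain (which is valid in that branch, since no replica can leave round $r$ before a QC or TC exists). Restructuring your argument with that case split closes the gap; everything else matches the paper.
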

\begin{proof}
    Let $t_{QC}$ be the first time at which a correct replica forms (\cref{line:recv-qc-vote:form-qc}) or receives a valid QC in round $r$.
    Let us first establish several straightforward facts that can be verified by inspecting the pseudocode:
    \begin{enumerate}
        \item\label{itm:liv:no-timeouts} Assuming that the $\tRoundTimeout$ timer is set to at least $(4+\epsilon)\Delta$ (after correcting for any possible clock speed differences), no correct replica will time out (\cref{line:round-timeout:start}) round $r$ (or higher) before time $\atreplace{t}{\tEnter}+(4+\epsilon)\Delta$.
        
        \item\label{itm:liv:no-tc} Hence, no TC for round $r$ (or higher) will be formed before time $\atreplace{t}{\tEnter}+(4+\epsilon)\Delta$.

        \item\label{itm:liv:no-early-round-exit} Hence, no valid entry reason for round $r+1$ can be formed before the time $\min\{t_{QC}, \atreplace{t}{\tEnter}+(4+\epsilon)\Delta\}$.

        \item\label{itm:liv:no-rogue-qc-votes} No correct replica will QC-vote in round $r$ for any block other than the one proposed by $\leader_r$.

        \item\label{itm:liv:no-rogue-qc} Hence, no QC for any block other than the one proposed by $\leader_r$ will be formed in round $r$ (\cref{line:recv-qc-vote:form-qc}), and no correct replica will CC-vote for any other block (\cref{line:on-new-qc:cc-vote}).
    \end{enumerate}

    Suppose $t_{QC} \le \atreplace{t}{\tEnter}+(3+\epsilon)\Delta$.
    Then, by the time $t_{QC}+\Delta \le \atreplace{t}{\tEnter}+(4+\epsilon)\Delta$, every correct replica will have received the CC-vote containing the QC (\cref{line:recv-cc-vote}) and, as it could not have timed out by that time (fact~(\ref{itm:liv:no-timeouts})), issued a CC-vote in round $r$ (\cref{line:on-new-qc:cc-vote}).
    By the time $t_{QC}+2\Delta \le \atreplace{t}{\tEnter}+(5+\epsilon)\Delta$, every correct replica will have received the CC-votes from every other correct replica.
    Since all the CC-votes will have the same block hash (fact~(\ref{itm:liv:no-rogue-qc})), every correct replica will have formed a CC (\cref{line:recv-cc-vote:if}), committing the block (\cref{line:recv-cc-vote:commit-qc}).

    All that is left to prove is that, indeed, $t_{QC} \le \atreplace{t}{\tEnter}+(3+\epsilon)\Delta$.
    For the sake of contradiction, suppose the contrary, i.e., that no valid QC is formed or received by correct replicas by the time 
    $\atreplace{t}{\tEnter}+(3+\epsilon)\Delta$.
    In this case:

    \begin{description}
        \item[by $\atreplace{t}{\tEnter}+\Delta$:] $\leader_r$ will have received a valid $\mAdvanceRound$ (\cref{line:advance-round:start}) message and entered round $r$.

        \item[by $\atreplace{t}{\tEnter}+2\Delta$:] Every correct replica will have received a valid $\mPropose$ message from $\leader_r$ and started the $\tQCVote$ timer (\cref{line:recv-proposal:start-timer}). 

        \item[by $\atreplace{t}{\tEnter}+(2+\epsilon)\Delta$:] The $\tQCVote$ timer will have expired at every correct replica and, as no correct replica could have yet entered a higher round (fact~(\ref{itm:liv:no-early-round-exit})), every correct replica will have QC-voted in round $r$ (\cref{line:qc-vote-call-timer,line:qc-vote:sign-and-send}).

        \item[by $\atreplace{t}{\tEnter}+(3+\epsilon)\Delta$:] Every correct replica will have received $\mQCVote$ messages from every other correct replica and, as QC-votes will have the same block hash (fact~(\ref{itm:liv:no-rogue-qc-votes})), formed a QC unless it had already received one (\cref{line:recv-qc-vote:form-qc})---A contradiction.
    \end{description}
\end{proof}

\begin{atreview}
\begin{lemma} \label{lem:consecutive-rounds}
    If any correct replica enters round $r > 1$, then at least some correct replica entered round $r-1$.
\end{lemma}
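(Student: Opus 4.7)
The plan is to do a case analysis on the entry reason used when the correct replica invoked $\TryAdvanceRound(r, \reason)$. By inspection of $\VerifyEntryReason$, $\reason$ must be one of $\rFullQC(\qc)$ with $\qc.\round = r-1$, $\rCC(\cc, \qc)$ with $\cc.\round = \qc.\round = r-1$, or $\rTC(\tc, \qc)$ with $\tc.\round = r-1$. In each case, I would extract a certificate that aggregates at least $\quorumSize$ partial signatures whose signed payloads explicitly commit to round $r-1$, and then apply supermajority quorum intersection together with unforgeability of the aggregate signature scheme to pin down at least one honest signer whose share was genuinely contributed.

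Next I would show that each such honest signature witnesses prior entry into round $r-1$ by that same replica. For the $\rFullQC$ case, the only site at which an honest replica emits a $\mQCVote$ with round field $r-1$ is \cref{line:qc-vote:sign-and-send} inside $\QCVote()$, which is reachable only with $\rCur = r-1$. Since $\rCur$ is updated exclusively inside $\TryAdvanceRound$, the honest replica must have previously invoked $\TryAdvanceRound(r-1, \cdot)$ successfully, i.e., entered round $r-1$. The $\rTC$ case is entirely analogous via the emission site at \cref{line:round-timeout:tc-vote}, where the signed payload contains $\rCur$ and hence forces $\rCur = r-1$ at signing time.

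The $\rCC$ case is slightly subtler because an honest CC-vote issued inside $\OnNewQC$ only requires $\qc.\round > \rTimeout$, and does not by itself require $\rCur = \qc.\round$. However, the QC embedded in a valid $\rCC$ entry reason is itself a valid round-$(r-1)$ QC, so I would reduce this case to the $\rFullQC$ argument applied to that attached QC: some honest replica must have contributed a QC-vote for round $r-1$, and hence entered round $r-1$.

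The main obstacle is the pseudocode bookkeeping: one has to verify, at each emission site of $\mQCVote$ and $\mTCVote$, that the round field in the signed payload really equals $\rCur$ at the moment of signing, so that the existence of an honest partial signature with round component $r-1$ truly implies a prior successful call to $\TryAdvanceRound(r-1, \cdot)$. Once this inspection is complete, the lemma follows immediately from the quorum intersection and unforgeability machinery already used in the safety proof.
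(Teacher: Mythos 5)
Your proof is correct and follows essentially the same route as the paper's: a case analysis on the entry reason, observing that each reason embeds a QC or TC for round $r-1$ whose quorum of votes must contain an honest signer, and that honest replicas only issue QC-votes and TC-votes in their current round (with the $\rCC$ case reduced to the embedded round-$(r-1)$ QC, exactly as the paper does). One minor remark: you invoke supermajority quorum intersection, but all that is needed here is the simpler fact that a single quorum of size $\quorumSize > f$ must contain at least one honest replica.
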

\begin{proof}
    Recall that there are 3 types of entry reasons: $\rFullQC$, $\rCC$, and $\rTC$.
    $\rFullQC$ and $\rCC$ require a QC from the previous round, which, in turn, requires a quorum of QC-votes in the previous round.
    Similarly $\rTC$ requires a TC from the previous round, which requires a quorum TC-votes.
    A correct replica only issues QC-votes and TC-votes in its current round and each quorum must contain at least one correct replica.
\end{proof}

\begin{lemma} \label{lem:enter-inf-rounds}
    Correct replicas will enter an infinite number of rounds. Of them, an infinite number will have honest leaders.
\end{lemma}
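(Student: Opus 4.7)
The plan is to argue by contradiction. Suppose that correct replicas collectively enter only finitely many rounds; let $r^\ast$ be the maximum round entered by any correct replica, and let $\tEnter$ be the time at which some correct replica first enters $r^\ast$. Without loss of generality assume $\tEnter \ge \mathit{GST}$ (if not, wait until after GST and re-pick $r^\ast$). I will show that, some bounded time after $\tEnter$, every correct replica forms a valid entry reason for round $r^\ast + 1$, contradicting the maximality of $r^\ast$.

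The first key step is a catch-up lemma: after GST, every correct replica eventually enters $r^\ast$. The mechanism is the entry-reason propagation built into the protocol. By \Cref{lem:consecutive-rounds}, if some correct replica is in round $r^\ast$, a valid entry reason for every round $r \le r^\ast$ exists and is held by at least one correct replica. Once the correct replica in $r^\ast$ eventually fires its $\tRoundTimeout$ (which it must, since by assumption it never commits and never advances), it multicasts an $\mTCVote$ carrying its $\entryReason$ for $r^\ast$ (\cref{line:round-timeout:tc-vote}); every correct recipient then invokes $\TryAdvanceRound$ with that reason (\cref{line:recv-tc-vote:catch-up-round}), pulling it into $r^\ast$ within $\Delta$ of receipt. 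Hence within a bounded interval after $\tEnter$, all correct replicas are in round $r^\ast$.

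The second step closes the contradiction: once all $\ge 2f+1$ correct replicas are in round $r^\ast$, each of them will eventually hit its $\tRoundTimeout$ (they cannot exit $r^\ast$ by forming a full-prefix QC or a CC, since that would let them enter $r^\ast+1$ and contradict maximality). Each then broadcasts an $\mTCVote$ for $r^\ast$; after one more $\Delta$, every correct replica has collected a quorum of TC-votes and forms a TC (\cref{line:recv-tc-vote:form-tc}), which is a valid entry reason for $r^\ast+1$, giving the contradiction. For the second clause---that infinitely many entered rounds have honest leaders---I will rely on the fairness of the leader-selection map (e.g., round-robin, as noted in \Cref{sec:baseline:notation}): at least a $(2f+1)/n$ fraction of rounds have honest leaders, so the now-infinite set of entered rounds contains infinitely many with an honest $\leader_r$.

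The main obstacle is the catch-up argument: one must be careful that a correct replica stuck at some $r < r^\ast$ genuinely receives an entry reason for $r^\ast$, because $\mAdvanceRound$ messages are unicast to the leader only and proposals are sent only by the round's leader (who may be Byzantine). The argument above leans crucially on the fact that $\mTCVote$ messages are multicast and carry $\entryReason$, so that timeouts at $r^\ast$ act as a broadcast catch-up mechanism; a cleaner write-up would likely prove a separate helper lemma formalizing "if any correct replica is in round $r$ after GST, then within $O(\Delta)$ every correct replica is in some round $\ge r$'' before plugging it into the contradiction.
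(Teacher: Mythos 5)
Your proof is correct and follows essentially the same route as the paper: contradiction via a maximal entered round $\rLast$, catch-up of lagging replicas through the entry reason carried in the multicast $\mTCVote$ (\cref{line:recv-tc-vote:catch-up-round}), and then TC formation from a quorum of timeout votes to advance past the supposed last round (\cref{line:recv-tc-vote:advance-round}). The only minor difference is the second clause: the paper derives it from \Cref{lem:consecutive-rounds} (entered rounds form a downward-closed, unbounded set, hence \emph{all} rounds are entered), whereas you appeal to leader-selection fairness over the infinite set of entered rounds---an inference that strictly speaking also needs \Cref{lem:consecutive-rounds}, since an arbitrary infinite subset of rounds could, under a round-robin map, consist entirely of rounds with Byzantine leaders.
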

\begin{proof}
    For the sake of contradiction, suppose that there is some round $\rLast$ such that some correct replica $p$ enters $\rLast$, but no correct replica ever enters any round $r > \rLast$.
    In this case, $p$ will eventually issue a TC-vote (\cref{line:round-timeout:tc-vote}).
    Since TC-votes include the entry reason, every other correct replica will eventually enter round $\rLast$ (\cref{line:recv-tc-vote:catch-up-round}) and also issue a TC-vote.
    Hence, all correct replicas will eventually receive enough TC-votes to form a TC and advance to round $\rLast+1$ (\cref{line:recv-tc-vote:advance-round})---A contradiction.

    The second part of the statement follows directly from \Cref{lem:consecutive-rounds}.
\end{proof}

\begin{lemma} \label{lem:commit-inf-blocks}
    Correct replicas will commit an infinite number of blocks.
    Of them, an infinite number proposed by honest replicas.
\end{lemma}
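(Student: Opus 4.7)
The plan is to combine the previous two lemmas directly: \Cref{lem:enter-inf-rounds} guarantees that infinitely many rounds are entered by some correct replica, and infinitely many of those rounds have honest leaders; then \Cref{lem:correct-leaders-commit} will let us turn each such honest-leader round (entered after GST) into a commit by all correct replicas. The non-duplication/rank properties of commits then ensure that these give rise to distinct committed blocks.

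More concretely, I would proceed as follows. First, let $R_H$ denote the (infinite) set of rounds with honest leaders into which some correct replica eventually enters; this set exists by \Cref{lem:enter-inf-rounds}. Since GST occurs at a finite (though unknown) time and each round is entered at most once by each replica, only finitely many rounds in $R_H$ can be entered by the first correct replica before GST. Hence there is an infinite subset $R_H^* \subseteq R_H$ such that for every $r \in R_H^*$, the first time any correct replica enters round $r$ is strictly after GST. Applying \Cref{lem:correct-leaders-commit} to each $r \in R_H^*$, the block proposed by the honest $\leader_r$ is committed by every correct replica.

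It remains to argue that these commits correspond to an infinite number of distinct blocks, and in particular to an infinite number of blocks proposed by honest replicas. Since each block in the commit of round $r \in R_H^*$ references round $r$ in its block structure, and no two distinct rounds share a block (a block's round is fixed at proposal), the committed blocks from distinct rounds in $R_H^*$ are pairwise distinct. Moreover each was proposed by the honest leader $\leader_r$. This yields an infinite sequence of committed blocks proposed by honest replicas, which in particular establishes that correct replicas commit infinitely many blocks.

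The main obstacle, and really the only subtlety, is handling the pre-GST prefix: we must justify that only finitely many honest-leader rounds in $R_H$ are entered before GST so that the post-GST tail remains infinite. This follows immediately from the fact that rounds are monotonically increasing natural numbers and GST is a finite time, but it is worth stating explicitly. Everything else is a direct invocation of the two preceding lemmas, so no new combinatorial or cryptographic arguments are needed.
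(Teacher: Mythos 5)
Your proof is correct and takes essentially the same route as the paper, which simply states that the lemma ``follows directly from'' \Cref{lem:enter-inf-rounds} and \Cref{lem:correct-leaders-commit}; you have merely spelled out the details (the post-GST tail of honest-leader rounds and the distinctness of blocks across rounds) that the paper leaves implicit.
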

\begin{proof}
    Follows directly from \Cref{lem:enter-inf-rounds,lem:correct-leaders-commit}.
\end{proof}

\begin{lemma}[Quorum Store liveness] \label{lem:qs-liveness}
    If an honest replica invokes $\aBcast(m, \sn)$, then eventually every honest replica will receive a proof of availability for $(m, \sn)$.
\end{lemma}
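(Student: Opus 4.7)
The plan is to trace the Quorum Store's three-phase exchange triggered by $\aBcast(m, \sn)$---batch multicast, PoA-vote collection, and PoA multicast---and to bound each phase by $\Delta$ after GST. When the honest replica $q$ invokes $\aBcast(m, \sn)$, it sets $\myBatches[\sn] = m$ and multicasts $\langle \mBatch, m, \sn \rangle$. Once this message has been (re-)sent after GST, partial synchrony guarantees that every honest replica receives it within $\Delta$. Each such replica then inserts $(m, \sn, q)$ into its local $\batches$ set and replies to $q$ with $\langle \mPoAVote, \sn, \sigma \rangle$ carrying a partial signature over $(H(m), \sn, q)$.

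Next I would argue that $q$ collects enough votes to build a PoA. Since at most $f$ replicas are Byzantine, at least $n - f = 2f + 1 = \quorumSize$ honest replicas respond, and each of their $\mPoAVote$ messages reaches $q$ within $\Delta$ after GST. The guard $|\poaVotes[\sn]| = \quorumSize$ therefore eventually fires; $q$ aggregates the shares with $\comb(\cdot)$, obtaining a valid PoA, and multicasts $\langle \mPoA, \poa \rangle$. By partial synchrony this final multicast is delivered to every honest replica within another $\Delta$ and inserted into each recipient's $\poas$ set, which is precisely what the lemma requires. Cryptographic unforgeability of the aggregate signature scheme (Definition~\ref{def:NIAS}) ensures that the combined signature passes $\VerifyPoA$ at every recipient, so the PoA that is added is valid.

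The main obstacle is handling executions in which $\aBcast$ is invoked before GST: the adversary may drop the initial $\mBatch$ multicast or any in-flight PoA votes, and the Quorum Store pseudocode does not include an explicit retransmission loop. The standard workaround is either to assume an implicit reliable-link abstraction (e.g., TCP-style retransmission) beneath the multicast primitive, or to augment the Quorum Store with a periodic re-multicast of $\langle \mBatch, m, \sn \rangle$ until a corresponding PoA for $(m, \sn)$ has been observed. Under either assumption, the three-phase argument above yields delivery of a valid PoA to every honest replica within $3\Delta$ of $q$'s first post-GST batch transmission, completing the proof.
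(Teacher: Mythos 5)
Your proof is correct and follows the same route as the paper, which simply asserts that the lemma ``trivially follows from the code'' of the Quorum Store algorithm; you have filled in the three-phase trace that the paper leaves implicit. Your observation about pre-GST message loss and the need for an implicit reliable-link or retransmission assumption is a legitimate subtlety that the paper's one-line proof glosses over, and your proposed fix is the standard one.
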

\begin{proof}
    Trivially follows from the code in \Cref{alg:raptr:quorum-store}.
\end{proof}

\begin{theorem}
    {\Raptr} satisfies the validity and totality properties of BAB (\Cref{def:bab}).
\end{theorem}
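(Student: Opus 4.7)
\medskip

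\noindent\textbf{Proof plan.}
The plan is to prove totality and validity separately, using the already-established safety result (\Cref{lem:all-committed-qcs-are-compatible}) together with the liveness lemmas \Cref{lem:commit-inf-blocks,lem:correct-leaders-commit,lem:qs-liveness}.

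For totality, suppose an honest replica $p$ outputs $\aDeliver(m, \sn, q)$. By inspection of $\CommitQC$ (\cref{line:commit-qc:start}), this means $p$ directly committed some QC $\qc$ such that $(m, \sn, q) \in \messages(\chain(\qc))$. I will show every other honest replica $p'$ also delivers $(m, \sn, q)$. By \Cref{lem:commit-inf-blocks}, $p'$ directly commits infinitely many QCs; let $\qc'$ be any one with $\rank(\qc') \ge \rank(\qc)$. By \Cref{lem:all-committed-qcs-are-compatible} (applied to $\qc$ and $\qc'$, both directly committed), $\qc \preceq \qc'$, so $\messages(\chain(\qc))$ is a prefix of $\messages(\chain(\qc'))$ and in particular $(m, \sn, q) \in \messages(\chain(\qc'))$. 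It remains to argue that $p'$ will eventually have all the data in $\chain(\qc')$ locally so that the \textbf{upon}-block inside $\CommitQC$ fires. Every block prefix in $\chain(\qc')$ is either a PoA (from which the corresponding batch can be fetched from the $\quorumSize$ signers, of which at least one is honest) or a certified sub-block (which by the availability requirement $\storageRequirement \ge f+1$ is stored by at least one honest replica). After GST honest replicas respond to fetches, so $\FetchQCData$ eventually succeeds, and $p'$ invokes $\aDeliver(m, \sn, q)$.

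For validity, suppose an honest $q$ invokes $\aBcast(m, \sn)$. By \Cref{lem:qs-liveness}, eventually every honest replica stores $(m, \sn, q)$ and receives a PoA for it; let $t^*$ be a time after GST by which every honest replica has this PoA in its local $\poas$ set. By \Cref{lem:commit-inf-blocks}, infinitely many blocks proposed by honest leaders are committed; pick any such block $B$ proposed by an honest leader entering some round $r$ at a time $\ge t^*$. Inspecting $\GetPayload$, the honest leader includes in $B$ \emph{all} PoAs in its local $\poas$ that are not already in $\chain(B.\qcParent)$. Hence, either $(m, \sn, q)$ is already contained in $\messages(\chain(B.\qcParent))$, or the PoA for $(m, \sn, q)$ is included in $B$ itself; either way $(m, \sn, q) \in \messages(\chain(\qc_B))$ for the QC certifying $B$. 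Since $B$ is committed by \Cref{lem:correct-leaders-commit}, the totality argument above applied to any honest replica that commits this block then yields $\aDeliver(m, \sn, q)$ at every honest replica.

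The main obstacle is the data-availability subtlety in the totality step: commitment of a QC and actual delivery are decoupled by the blocking \textbf{upon}-block in $\CommitQC$, so one has to argue that for \emph{every} replica $p'$ and \emph{every} committed QC $\qc'$ in $p'$'s chain, the fetch procedure eventually completes. This is where the choice $\storageRequirement \ge f+1$ and the structural guarantees of PoAs are essential; everything else reduces to routine quorum-intersection and the already-proven safety/liveness machinery.
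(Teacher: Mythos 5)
Your proof is correct and follows essentially the same route as the paper, which simply states that validity follows from \Cref{lem:qs-liveness,lem:commit-inf-blocks} and totality from \Cref{lem:commit-inf-blocks,lem:all-committed-qcs-are-compatible}; you have just unfolded the details of how those lemmas combine. The one thing you add beyond the paper's one-line argument is the explicit treatment of the blocking \textbf{upon}-clause in $\CommitQC$ (arguing fetches terminate via $\storageRequirement \ge f+1$ and PoA quorums), which is a real subtlety the paper's proof elides but which your handling resolves correctly.
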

\begin{proof}
    Validity follows directly from \Cref{lem:qs-liveness,lem:commit-inf-blocks}.

    Totality follows directly from \Cref{lem:commit-inf-blocks,lem:all-committed-qcs-are-compatible}.
\end{proof}
\end{atreview}

\end{document}